\documentclass[12pt]{iopart}

\expandafter\let\csname equation*\endcsname\relax
\expandafter\let\csname endequation*\endcsname\relax

\usepackage{iopams,amsmath,amsthm,mathbbol,url,graphicx,enumerate}
\usepackage{color}

\newtheorem{thm}{Theorem}[section]

\newtheorem{cor}[thm]{Corollary}
\newtheorem{lem}[thm]{Lemma}
\newtheorem{prop}[thm]{Proposition}

\newtheorem{rem}[thm]{Remark}

\newcommand{\partd}[2][]{\frac{\partial #1}{\partial #2}}

\newcommand{\st}{\mathop{} \big| \mathop{}}
\newcommand{\set}[1]{\left\{ #1 \right\}}

\begin{document}

\title{Generic Ising trees}

\author{B. Durhuus$^1$, G. M. Napolitano$^2$}
\address{Dept. of Mathematical Sciences, University of Copenhagen, Universitetsparken 5, 2100 Copenhagen \O, Denmark}
\ead{$^1$durhuus@math.ku.dk, $^2$gmn@math.ku.dk}

\begin{abstract}
The Ising model on a class of infinite random trees is defined as a
thermodynamic limit of finite systems. A detailed description of the
corresponding distribution of infinite spin configurations is
given. As an application we study the magnetization properties of such
systems and prove that they exhibit no spontaneous
magnetization. Furthermore, the values of the Hausdorff and spectral
dimensions of the underlying trees are calculated and found to be,
respectively, $\bar{d}_h=2$ and $\bar{d}_s=4/3$. 
\end{abstract}

\pacs{05.50.+q, 05.70.Fh, 02.50.Cw, 05.40.Fb}
\ams{82B20, 05C80}

\maketitle

\section{Introduction}
Since its appearance, the Ising model has been considered in various
geometrical backgrounds. Most familiar are the regular lattices, where
it is well known that in dimension $d=1$, originally considered by
Ising and Lenz \cite{Ising,Lenz}, there is no phase transition as
opposed to dimension $d \geq 2$, where spontaneous magnetization
occurs at sufficiently low temperature \cite{Onsager, Stanley}. 

The Ising model on a Cayley tree turns out to be exactly solvable
\cite{Domb, Egg, Katz-Takiz, M-HZ}. Despite the fact that the free energy, in this
case, is an analytic function of the temperature at vanishing magnetic
field, the model does have a phase transition and exhibits spontaneous
magnetization in the central region of the tree, also called Bethe
lattice \cite{Baxter}. One may attribute this unusual behavior to the
large size of the boundary of a ball in the tree as compared to its
volume. This result has been generalized to non-homogeneous graphs in \cite{Lyons} (see also references therein).

Studies of the Ising model on non-regular graphs are generally
non-tractable from an analytic point of view. For numerical studies
see e.g. \cite{BJK}. See also \cite{BL}, where the Ising model with external field
coupled to the causal dynamical triangulation model is studied via high- and low-temperature expansion
techniques. In \cite{ADJ} a grand canonical ensemble of Ising
models on random finite trees was considered, motivated by studies in
two dimensional quantum gravity \cite{ADJ-qg}. It was argued in
\cite{ADJ} that the model does not exhibit spontaneous magnetization
at values of the fugacity where the mean size of the trees diverges. 

In the present paper we study the Ising model with an external magnetic field on certain infinite
random trees, constructed as ``thermodynamic'' limits of Ising systems
on random finite trees. The models are defined in terms of branching weights $p_n$, $n \geq 0$, associated to
vertices and subject to a certain \emph{genericity condition} \cite{DJW2}. The latter is a rather mild
requirement, satisfied e.g. by all non-linear trees of bounded vertex degree, i.e. if $p_n = 0$ for
$n$ large enough, and more generally, if the generating function $\sum_n p_n t^n$ has infinite
radius of convergence. Another example is the so called uniform infinite tree corresonding to
$p_n = 1$, $n \geq 0$. The precise form of the condition is stated in eq. (\ref{eq:genass}) and
used as an important ingredient in the construction of the infinite-size limit that
will be referred to as a \emph{generic Ising tree}.

Using tools developed in \cite{BD03, DJW2} we prove for such
ensembles that spontaneous magnetization is absent. The basic reason
is that the generic infinite tree has a certain one dimensional
feature despite the fact that we prove its Hausdorff dimension to be
2. Furthermore, we obtain results on the spectral dimension of
generic Ising trees.

We remark that if the genericity condition is violated the random trees
under consideration tend to develop vertices of infinite order in the
limit of infinite size \cite{JS09, JS11}, and will not be treated in this paper.

This article is organized as follows. After a brief review of some
basic graph theoretic notions that will be used throughout the article
and fixing some notation we define, in Section \ref{sec:gen-set}, the
finite size systems whose infinite size limits are our main object of
study. The remainder of Section \ref{sec:gen-set} is devoted to an overview of
the main results, including the existence and detailed description
of the infinite size limit, the magnetization properties and the
determination of the annealed Hausdorff and spectral dimensions of
generic Ising trees.

The next two sections provide detailed proofs and, in some
cases, more precise statements of those results. Under the genericity
assumption mentioned above we determine, in Section \ref{sec:inf-measure}, the asymptotic behavior of
the partition functions of ensembles of spin systems on finite trees
of large size. This allows a construction of the limiting distribution on
infinite trees and also leads to a precise description of the limit. 
In Section  \ref{sec:dim} we exploit the
latter characterization to determine the annealed Hausdorff and spectral
dimensions of the generic Ising trees, whereafter we establish absence
of magnetization in Section \ref{sec:mag-prop}. 

Finally, some concluding remarks on possible future developments are collected in Section \ref{sec:concl}.

\section{Definition of the models and main results}
\label{sec:gen-set}

\subsection{Basic definitions}
Recall that a \emph{graph} $G$ is specified by its \emph{vertex} set $V(G)$ and
its \emph{edge} set $E(G)$. Vertices will be denoted by $v$ or
$v_i$ etc. An edge is then an unordered pair $(v,v')$ of different
vertices. A vertex $v'$ is called a \emph{neighbor} of $v$, if $(v,v') \in E(G)$. 
Both finite and infinite graphs will be considered,
i.e. $V(G)$ may be finite or infinite, and all graphs will be assumed
to be locally finite, i.e. the number $\sigma_v$ of edges containing a
vertex $v$, called the \emph{degree} (or \emph{order}) of $v$, is finite for all $v\in
V(G)$. By the \emph{size} of $G$ we shall mean the number of edges in $G$
and denote it by $|G|$, i.e. $|G|= \sharp E(G)$, where $\sharp M$ is
used to denote the number of elements in a set $M$.

A \emph{path} in $G$ is a sequence of different edges
$(v_0,v_1),(v_1,v_2),\ldots,(v_{k-1},v_k)$ where $v_0$ and $v_k$ are
called the \emph{end vertices}. If $v_0=v_k$ the path is called a
\emph{circuit} originating at $v_0$. The graph $G$ is called
\emph{connected} if any two vertices $v$ and $v'$ of $G$ can be
connected by a path, i.\,e. they are end vertices of a path. The
\emph{graph distance} between $v$ and $v'$ is then defined as the
minimal number of edges in a path connecting them. A connected graph
is called a \emph{tree} if it has no circuits. 

Given a connected graph $G$, $R\geq 0$ and $v \in V(G)$, we denote by
$B_R(G,v)$ the \emph{closed ball} of radius $R$ centered at $v$, i.e. 
$B_R(G,v)$ is the subgraph of $G$ spanned by the vertices at graph distance 
$\leq R$ from $v$.

A \emph{rooted} tree is a tree with a distinguished vertex $r$ called the \emph{root vertex}, which will be assumed to be of order 1 in the following. For a rooted tree, we define the \emph{children} of a vertex $v$, at distance $k$ from the root, as the neighboring vertices of $v$ at distance $k+1$ from $r$. A \emph{plane} tree is a rooted tree together with an ordering of the children of each vertex\footnote{The terms \emph{ordered} tree and  \emph{planar} tree are also used in the literature.}. 

We denote by $\mathcal{T}$ the set of such trees, 
by $\mathcal{T}_N$ the subset of $\mathcal{T}$ of trees of size $N$
and by $\mathcal{T}_\infty$ the subset of infinite trees, such that
\begin{equation}
	\mathcal{T} = \left( \bigcup_{N=1}^\infty \mathcal{T}_N \right) \cup \mathcal{T}_\infty.
\end{equation}

The \emph{height} of a finite tree is the maximal distance from the root to one of its vertices.

The set $\mathcal{T}$ is a metric space with the distance
between two trees $\tau$ and $\tau'$ defined by 
\begin{equation}
	\tilde{d}(\tau,\tau') = \inf \set{\frac{1}{R+1} \st B_R(\tau)=B_R(\tau')},
\label{eq:metric-T}
\end{equation}
where $B_R(\tau)$ denotes the ball of radius $R$ centered at the root $r$, i.e. $B_R(\tau) \equiv B_R(\tau,r)$. See \cite{BD03} for further details on properties of $\tilde{d}$. In particular, $\tilde{d}$ is an ultrametric, i.e.
\begin{equation}
	\tilde{d}(\tau,\tau') \leq \max\set{\tilde{d}(\tau,\tau''),\tilde{d}(\tau',\tau'')}
\end{equation}
for all $\tau$, $\tau''$, $\tau'' \in \mathcal{T}$.

\subsection{The models and the thermodynamic limit}

The statistical mechanical models considered in this paper are defined
in terms of plane trees as follows. Let $\Lambda_N$ be the set of rooted
plane trees of size $N$ decorated with Ising spin configurations,
\begin{equation}
	\Lambda_N = \set{ s \, : \, V(\tau) \to \set{\pm 1} \st \tau \in \mathcal{T}_N},
\end{equation}
and set
\begin{equation}
	\Lambda = \left( \bigcup_{N=1}^\infty \Lambda_N \right) \cup \Lambda_\infty,
\end{equation}
where $\Lambda_\infty$ denotes the set of infinite decorated trees. In the following we will often denote by $\tau_s$ a generic element of $\Lambda$, in particular when stressing the underlying tree structure $\tau$ of the spin configuration $s$. Furthermore, we shall use both $s_v$ and $s(v)$ to denote the value of the spin at vertex $v$.

The set $\Lambda$ is a metric space with metric $d$ defined by
\begin{equation}
	d(\tau_{s},\tau'_{s'}) = \inf\set{\frac{1}{R+1} \st B_R(\tau) = B_R(\tau'), \, s|_{B_R(\tau)}=s'|_{B_R(\tau')}},
\label{eq:metric-Lambda}
\end{equation}
as a generalization of (\ref{eq:metric-T}).

We define a probability measure $\mu_N$ on $\Lambda_N$ by
\begin{equation}\label{muN}
	\mu_N(\tau_s) = \frac{1}{Z_N} e^{-H(\tau_s)} \rho(\tau), \quad \tau_s \in \Lambda_N,
\end{equation}
where the Hamiltonian $H(\tau_s)$, describing the interaction of each spin with 
its neighbors and with the constant external magnetic field $h$ at inverse
temperature $\beta$, is given by
\begin{equation}
	H(\tau_s) = - \beta \sum_{(v_i,v_j)\in E(\tau)} s_{v_i} s_{v_j} - h \sum_{v_i \in V(\tau)\setminus r} s_{v_i}.
\end{equation}
The weight function $\rho(\tau)$ is defined in terms of the \textit{branching weights} $p_{\sigma_v - 1}$ associated with vertices $v \in V(\tau) \setminus r$, and is given by
\begin{equation}
	\rho(\tau) = \prod_{v \in V(\tau) \setminus r} p_{\sigma_v - 1}.
\end{equation}
Here $(p_n)_{n\geq0}$ is a sequence of non-negative numbers such that
$p_0 \neq 0$ and $p_n\neq 0$ for some $n \geq 2$ (otherwise only
linear chains would contribute). We will further assume the branching
weights to satisfy a \emph{genericity condition} explained below in
(\ref{eq:genass}), and which defines the \emph{generic Ising tree
  ensembles} considered in this paper (see also \cite{DJW2}). Finally,
the partition function $Z_N$ in \eqref{muN} is given by
\begin{equation}
	Z_N(\beta,h) = \sum_{\tau \in \mathcal{T}_N} \sum_{s \in S_\tau} e^{-H(\tau_s)} \rho(\tau),
\label{eq:part-func-N}
\end{equation}
where $S_\tau = \set{\pm 1}^{V(\tau)}$.

We note that in the following the measure $\mu_N$ will be considered as a measure on $\Lambda$ supported on the finite set $\Lambda_N$, that is
\begin{equation}
	\mu_N(\Lambda \setminus \Lambda_N) = 0.
\end{equation}

Our first result (see Sec. \ref{sec:inf-measure}) establishes the existence 
of the thermodynamic limit of this model, in the sense that we prove the existence of a 
limiting probability measure $\mu = \mu^{(\beta,h)}=\lim_{N \to \infty} \mu_N$ defined on the set of trees of
infinite size decorated with spin configurations. Here, the limit 
should be understood in the weak sense, that is
\begin{equation}
	\int_\Lambda f(\tau_s) \, d \mu_N(\tau_s) \xrightarrow{N\rightarrow \infty} \int_\Lambda f(\tau_s) \, d \mu(\tau_s)
\end{equation}
for all bounded continuous functions $f$ on $\Lambda$. In particular, we find that the measure 
$\mu$ is concentrated on the set of infinite trees with a single infinite path, the \emph{spine}, starting at the root $r$, 
and with finite trees attached to the spine vertices, the \emph{branches}, see Fig. \ref{fig:grt}.

\begin{figure}%
\centering
\includegraphics[width=\columnwidth]{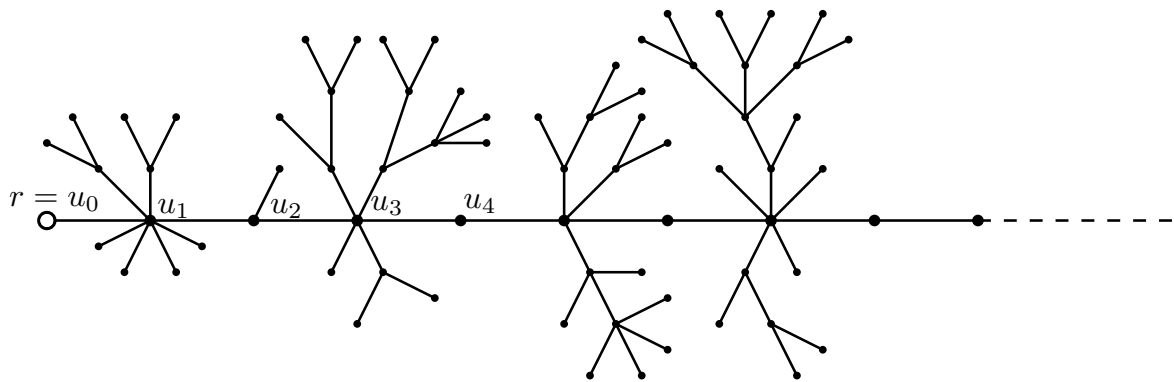}%
\caption{Example of an infinite tree, consisting of a spine and left and right branches.}%
\label{fig:grt}%
\end{figure}

As will be shown, the limiting distribution $\mu$ can be expressed in
explicit terms in such a way that a number of its 
characteristics, such as the Hausdorff dimension, the spectral dimension, as well as the
magnetization properties of the spins, can be analyzed in some
detail. For the reader's convenience we now give a brief account of those results.

\subsection{Magnetization properties}
\label{sec:sm-mag}
As a first result we show that the generic Ising tree exhibits no single site spontaneous magnetization 
at the root $r$ or at any other spine vertex, i.\,e. 
\begin{equation}
	\lim_{h \to 0} \mu^{(\beta,h)}(\set{\tau_s \st s(v) = + 1}) = \frac{1}{2}\,,
\end{equation}
for any vertex $v$ on the spine and all $\beta \in \mathbb{R}$. Details of this result can be found in Theorem \ref{thm:nomag-spine}.

The fact that the measure $\mu$ is supported on trees with a single
spine  gives rise to an analogy with the one-dimensional Ising model. In fact, we
show that the spin distribution on the spine equals that of the Ising
model on the half-line at the same temperature but in a modified
external magnetic field. As a consequence, we find that also the mean magnetization of the spine
vanishes for $h \to 0$. 

A different and perhaps more relevant result concerns the the total mean
magnetization, which may be stated as follows. First, let us 
define the mean magnetization in the ball of radius $R$ around the root by 
\begin{equation}\label{mag1}
	M_R(\beta,h) = \left< |B_R(\tau)| \right>^{-1}_{\beta,h} \left< \sum_{v \in B_R(\tau)} s_v \right>_{\beta,h}
\end{equation}
and the mean magnetization on the full infinite tree as
\begin{equation}\label{mag2}
	M(\beta,h) = \limsup_{R \rightarrow \infty} M_R(\beta,h).
\end{equation}
Here $\langle \cdot \rangle_{\beta,h}$ denotes the expectation value w.r.t. $\mu^{(\beta,h)}$. For the generic Ising tree, we prove in Theorem \ref{thm:no-mean-mag}  that this quantity satisfies
\begin{equation}
	\lim_{h \rightarrow 0} M(\beta,h) = 0, \quad \beta \in \mathbb{R}\,.
\end{equation}

It should be noted that for a fixed infinite homogeneous lattice, such as the Bethe lattice mentioned previously, the single site magnetization is constant over the lattice and hence equals the mean magnetization as defined above. The systems under consideration in this paper are defined on random lattices and do not possess any obvious homogeneity properties.

\subsection{Hausdorff dimension}
Given an infinite connected graph $G$, if the limit 
\begin{equation}\label{Haus1}
d_h = \lim_{R\to\infty}\frac{\ln |B_R(G,v)|}{\ln R}
\end{equation}
exists, we call $d_h$ the \emph{Hausdorff dimension} of $G$. It is easily seen that the
existence of the limit as well as its value do not depend on the
vertex $v$.

For an ensemble of infinite graphs $\mathcal{G}_\infty$ with a probability measure $\nu$,
we define the \emph{annealed} Hausdorff dimension by
\begin{equation}\label{Haus2}
\bar d_h = \lim_{R\to\infty}\frac{\ln\,
  \langle\,|B_R(G)|\,\rangle_{\nu}}{\ln R}\,, 
\end{equation}
provided the limit exists, where $<\cdot>_\nu$ denotes the expectation
value w.r.t. $\nu$.

We show in Theorem \ref{thm:Haus-dim} that the annealed Hausdorff
dimension of a generic Ising tree can be evaluated and equals that of
generic random trees as introduced in \cite{DJW2}, i.e. 
\begin{equation}
	\bar{d}_h = 2\,.
\end{equation}

\subsection{Spectral dimension}\label{sec:spec-dim}
A \emph{walk} on a graph $G$ is a sequence
$(v_0,v_1),(v_1,v_2),\ldots,(v_{k-1},v_k)$ of (not necessarily
different) edges in $G$. We shall denote such a walk by $\omega:
v_0\to v_k$ and call $v_0$ the \emph{origin} and $v_k$ the \emph{end}
of the walk. Moreover, the number $k$ of edges in $\omega$ will be
denoted by $|\omega|$. To each such walk $\omega$ we associate a
weight
\begin{equation}
	\pi_G(\omega) = \prod_{i=0}^{|\omega|-1} \sigma_{\omega(i)}^{-1}
\end{equation}
where $\omega(i)$ is the i'th vertex in $\omega$. Denoting by
$\Pi_n(G,v_0)$ the set of walks of length $n$ originating at vertex $v_0$   
we have
\begin{equation}
	\sum_{\omega\in\Pi_n(G,v_0)} \pi_G(\omega) = 1\,.
\end{equation}
i.e. $\pi_G$ defines a probability distribution on $\Pi_n(G,v_0)$. We call
$\pi_G$ the \emph{simple random walk} on $G$.

For an infinite connected graph $G$ and $v\in V(G)$ we denote by $\pi_t(G,v)$ the
\emph{return probability} of the simple random walk to $v$ at time
$t$, that is
\begin{equation}
	\pi_t(G,v) = \sum_{\substack{\omega:v\to v \\ |\omega|=t}} \pi_G(\omega)\,. 
\end{equation}
One can in a standard manner relate this quantity to the discrete heat
kernel on $G$, but we shall not need this interpretation in the following.
If the limit 
\begin{equation}\label{spectral1}
d_s = - 2\,\lim_{t\to\infty}\frac{\ln \pi_t(G,v)}{\ln t}
\end{equation}
exists, we call $d_s$ the \emph{spectral dimension} of $G$. Again in
this case, the existence and value of the limit are
independent of $v$. 

If $G$ is the hyper-cubic lattice $\mathbb Z^d$ it is clear that
$d_h=d$ and by Fourier analysis it is straight-forward to see that
also $d_s=d$.  
However, examples of graphs with $d_h\neq d_s$ are abundant, see
e.g. \cite{DJW1}.
 
The \emph{annealed} spectral dimension of an ensemble 
$(\mathcal{G}_\infty,\nu)$ of rooted infinite graphs is defined as
\begin{equation}\label{spectral2}
\bar d_s = - 2\,\lim_{t\to\infty}\frac{\ln\;\langle\,
  \pi_t(G,r)\,\rangle_{\nu}}{\ln t} 
\end{equation}
provided the limit exists.

We show in Theorem \ref{thm:spect-dim} that the annealed spectral dimension  
of a generic Ising tree is
\begin{equation}
	\bar{d}_s = \frac{4}{3}.
\end{equation}

The values of the Hausdorff dimension and the spectral dimension of generic Ising trees 
are thus found to coincide with those of generic random trees \cite{DJW2}. This indicates that the geometric structure of the
underlying trees is not significantly influenced by the coupling to
the Ising model as long as the model is generic.

\section{Ensembles of infinite trees}
\label{sec:inf-measure}

In this section we establish the existence of the measure $\mu^{(\beta,h)}$ on the set of infinite trees for values of $\beta, h$ that will be specified below. Our starting point is the Ising model on finite but large trees. We first consider the dependence of its partition function on the size of trees.   

\subsection{Asymptotic behavior of partition functions}
\label{sec:part-func}
 
Let the branching weights $(p_n)_{n \geq 0}$ be given as above and consider the generating functions
\begin{equation}
	\varphi(z) = \sum_{n=0}^\infty p_n z^n,
\label{eq:generat_weight}
\end{equation}
which we assume to have radius of convergence $\xi > 0$, and
\begin{equation}
	Z(\beta,h,g) = \sum_{N=0}^\infty Z_N(\beta,h) g^N,
\end{equation} 
where $Z_N$ is given by (\ref{eq:part-func-N}).

Decomposing the set $S_\tau$ into the two disjoint sets
\begin{equation}
	S_\tau^{\pm} = \set{s \in S_\tau  \st s(r) = \pm 1},
\end{equation}
gives rise to the decompositions
\begin{equation}
	\Lambda_N = \Lambda_{N+} \cup \Lambda_{N-}
\end{equation}
and
\begin{equation}
	\Lambda = \Lambda_+ \cup \Lambda_-.
\end{equation}
Correspondingly, we get
\begin{equation}
	Z(\beta,h,g) = Z_+(\beta,h,g) + Z_-(\beta,h,g),
\end{equation}
where the generating functions $Z_{\pm}(\beta,h,g)$ are given by 
\begin{equation}
	Z_\pm(\beta,h,g) = \sum_{N=0}^\infty Z_{N\pm}(\beta,h) g^N,
\end{equation}
and $Z_{N\pm}$ are defined by restricting the second sum in (\ref{eq:part-func-N}) to  $S_\tau^{\pm}$.

\begin{figure}%
\centering
\includegraphics[width=\columnwidth]{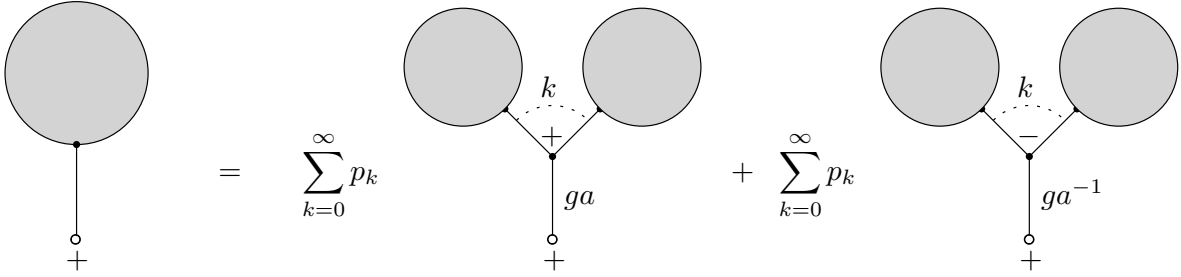}%
\caption{Tree decomposition corresponding to the first equation of the system (\ref{eq:zpzm}). The
  tree is decomposed according to the spin and the degree of the
  root's neighbor.}%
\label{fig:treedecomp}%
\end{figure}

Decomposing the tree as in Fig.\ref{fig:treedecomp}, it is easy to see that the functions $Z_\pm(g)$ are determined by the system of equations
\begin{equation}
	\begin{cases}
		Z_+ = g (a \, \varphi(Z_+) + a^{-1} \, \varphi(Z_-))\\
		Z_- = g (b \, \varphi(Z_+) + b^{-1} \, \varphi(Z_-))
	\end{cases}
\label{eq:zpzm}
\end{equation}
where
\begin{equation}
	a = e^{\beta + h}, \quad b = e^{-\beta + h}. 
\end{equation}

Let us define $F : \set{ |z|< \xi }^2 \times \mathbb{C} \rightarrow \mathbb{C}^2$ by
\begin{equation}
	F(Z_+,Z_-,g) = \mathcal{Z} - g \, \Phi(Z_+,Z_-),
\label{eq:f}
\end{equation}
where,
\begin{equation}
	\mathcal{Z} \equiv 
	\begin{pmatrix}
			Z_+ \\
			Z_-
		\end{pmatrix},
	\qquad \Phi(Z_+,Z_-) \equiv
	\begin{pmatrix}
			a \, \varphi(Z_+) + a^{-1} \, \varphi(Z_-)\\
			b \, \varphi(Z_+) + b^{-1} \, \varphi(Z_-)
		\end{pmatrix}.
\end{equation}
With the assumption $\xi >0$, we have
\begin{equation}
	\partd[F]{\mathcal{Z}} = \mathbb{1} - g \, \partd[\Phi]{\mathcal{Z}} = \mathbb{1} - g
		\begin{pmatrix}
			a \, \varphi'(Z_+) & a^{-1} \, \varphi'(Z_-)\\
			b \, \varphi'(Z_+) & b^{-1} \, \varphi'(Z_-)
		\end{pmatrix},
\end{equation}
and in particular,  $F(0,0,0)=0$ and $\partd[F]{\mathcal{Z}}(0,0,0) = \mathbb{1}$.
The holomorphic implicit function theorem (see e.g. \cite{FS}, Appendix B.5 and refs. therein) implies that the fixpoint equation 
(\ref{eq:zpzm}) has a unique holomorphic solution $Z_\pm(g)$ in a neighborhood of
$g = 0$. Let $g_0$ be the radius of convergence of the Taylor series of
$Z_+(g)$. Since the Taylor coefficients of $Z_+$ are non-negative, $g=g_0$
is a singularity of $Z_+(g)$ by Pringsheim's Theorem (\cite{FS} Thm.IV.6). Setting
\begin{equation}
	Z_+(g_0) = \lim_{g \nearrow g_0} Z_+(g)
\end{equation}
we have that $Z_+(g_0) < + \infty$. In fact, if $\xi = \infty$ this follows
from (\ref{eq:zpzm}), since $\varphi(Z_+)$ increases faster than linearly at $+ \infty$,
assuming that $p_n > 0$ for some $n \geq 2$. If $\xi < + \infty$ we must have 
$Z_\pm(g_0) \leq \xi$, because otherwise there would exist $0 < g_1 < g_0$ such
that $Z_+(g_1) = \xi$ and $Z_-(g_1) \leq \xi$ (or vice versa), contradicting (\ref{eq:zpzm})
(the LHS would be analytic at $g_1$ and the RHS not). In particular, we also have 
$g_0 < + \infty$ and that $g_0$ equals the radius of convergence for the Taylor 
series of $Z_-(g)$ by (\ref{eq:zpzm}).

The \textit{genericity assumption} mentioned above states that
\begin{equation}
	Z_\pm(g_0) < \xi\,,
\label{eq:genass}
\end{equation}
where we recall that $\xi$ is the radius of convergence of
(\ref{eq:generat_weight}). As mentioned previously, (\ref{eq:genass})
trivially holds if $\varphi(z)$ is a polynomial. In particular, for the random binary tree, where
\begin{equation}
	\varphi(z) = 1 + z^2,
\end{equation}
one finds that
\begin{equation}
	Z_\pm(\beta,0,g) = \frac{1 - \sqrt{1-8 g \cosh\beta}}{4g \cosh\beta}
\end{equation}
and hence $g_0 = \frac{1}{8 \cosh\beta}$ for $h=0$. On the other hand, for the uniform random tree with $p_n = 1$, $n \geq 0$, we have
\begin{equation}
	\varphi(z) = \frac{1}{1-z}
\end{equation}
and one finds
\begin{equation}
	Z_\pm(\beta,0,g) = \frac{1 - \sqrt{1-8 g \cosh\beta}}{2},
\end{equation}
and again $g_0 = \frac{1}{8 \cosh\beta}$ for $h=0$, which gives
\begin{equation}
	Z_\pm(g_0) = \frac{1}{2} < \xi = 1.
\end{equation}

\begin{rem}\label{rem:zero_field}
It should be noted that, in the absence of an external magnetic field,
i.e. for $h = 0$, one has $Z_+(\beta,0,g)=Z_-(\beta,0,g) \equiv
\bar{Z}(\beta,g)$ and the system (\ref{eq:zpzm}) determining $Z_\pm$
reduces to the single equation $\bar{Z} = 2g \cosh\beta \,
\varphi(\bar{Z})$. On the other hand, the same equation characterizes the
random tree models considered in \cite{DJW2} except for a rescaling of the
coupling constant $g$ by the factor $2\cosh\beta$. It follows that the
condition \eqref{eq:genass} can be considered as a generalization of
the genericity condition introduced in \cite{DJW2}. For this reason, the
results on the Hausdorff dimension and the spectral dimension
established in this paper follow from \cite{DJW2} in the case  $h=0$.

That the Ising model, without external field and with free boundary
conditions, simply gives rise to a rescaling of the coupling constant
$g$ is also easily seen directly for the finite volume partition
functions $Z_N$. 
\end{rem}

We henceforth assume (\ref{eq:genass}) to hold. Then the implicit function theorem gives
\begin{equation}
	\det \left( \mathbb{1} - g_0 \, \Phi'_0 \right) = 0,
\label{eq:J-zero}
\end{equation}
where
\begin{equation}
	\Phi'_0 = \Phi'(Z_+^0 , Z_-^0) = 
		\begin{pmatrix}
			a \, \varphi'(Z_+^0) & a^{-1} \, \varphi'(Z_-^0)\\
			b \, \varphi'(Z_+^0) & b^{-1} \, \varphi'(Z_-^0)
\end{pmatrix},
\label{eq:phiprime-def}
\end{equation}
with $Z_\pm^0 = Z_\pm(g_0)$. This justifies the first statement of the following proposition.
\begin{prop}
Assuming (\ref{eq:genass}), the system of equations
\begin{align}
	Z_+ &= g (a \, \varphi(Z_+) + a^{-1} \, \varphi(Z_-)) \label{eq:Zp_func_eq} \\
	Z_- &= g (b \, \varphi(Z_+) + b^{-1} \, \varphi(Z_-)) \label{eq:Zm_func_eq}\\
	0 &=  \det \left( \mathbb{1} - g \, \Phi'(Z_+,Z_-) \right) \label{eq:g0_func_eq}
\end{align} 
admits a solution $(Z_+^0,Z_-^0,g_0)$, with $Z_\pm^0 =
Z_\pm(g_0)$. Moreover, there exists $r >0$ such that the functions
$Z_\pm(g)$ have a representation of the form 
\begin{equation}
	Z_\pm(g) = Z_\pm^0 - \sqrt{K_\pm (g_0 - g)}
\label{eq:Zpm_squareroot}
\end{equation} 
in $\set{g \st |g-g_0| < r \text{ and } \arg(g-g_0) \neq 0}$. Here the constants $K_\pm$ (depending only on $\beta$ and $h$) are given by
\begin{equation}
	K_+ = \alpha^2 K_-
\label{eq:Kpm}
\end{equation}
with
\begin{equation}
	\alpha \equiv \frac{g_0 \, a^{-1} \, \varphi'(Z_-^0)}{1 - g_0 \, a \, \varphi'(Z_+^0)} = \frac{1 - g_0 \, b^{-1} \, \varphi'(Z_-^0)}{g_0 \, b \, \varphi'(Z_+^0)},
\label{eq:alpha}
\end{equation}
and
\begin{equation}\label{eq:Km}
	K_- \equiv \frac{2}{g_0} \, \frac{\alpha \, a \, \varphi(Z_+^0) + b^{-1} \, \varphi(Z_-^0)}{\alpha^3 \, a \, \varphi''(Z_+^0) + b^{-1} \, \varphi''(Z_-^0)}\,. 
\end{equation}
\end{prop}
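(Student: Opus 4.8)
The plan is to analyze the singularity of the solution $(Z_+(g), Z_-(g))$ at $g = g_0$ by exploiting that the Jacobian $\partial F / \partial \mathcal{Z}$ becomes singular there while $F$ itself remains analytic in all its arguments (since $Z_\pm^0 < \xi$ by genericity). Concretely, I would first establish that $g = g_0$ is the \emph{only} singularity of $Z_+(g)$ on its circle of convergence. This uses the aperiodicity built into the branching weights: since $p_0 \neq 0$ and $p_n \neq 0$ for some $n \geq 2$, the generating function $\varphi$ (and hence the map $\Phi$) is not supported on a sublattice, so the standard argument (as in \cite{FS}, or as carried out in \cite{DJW2}) shows $Z_\pm$ extends analytically to a $\Delta$-domain minus the point $g_0$. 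In particular the representation \eqref{eq:Zpm_squareroot} only needs to be justified in a slit neighborhood of $g_0$.

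Next I would set up the local expansion. Write $g = g_0 - \epsilon$ and $\mathcal{Z} = \mathcal{Z}^0 + \mathcal{W}$ with $\mathcal{Z}^0 = (Z_+^0, Z_-^0)^T$, and Taylor-expand the fixpoint equation $F(\mathcal{Z}, g) = 0$ to second order in $\mathcal{W}$ and first order in $\epsilon$. The linear term in $\mathcal{W}$ is $(\mathbb{1} - g_0 \Phi'_0)\mathcal{W}$, which annihilates a one-dimensional kernel by \eqref{eq:J-zero}; one checks directly from \eqref{eq:phiprime-def} that this kernel is spanned by $(\alpha, 1)^T$ with $\alpha$ as in \eqref{eq:alpha} (the two expressions for $\alpha$ agree precisely because the determinant vanishes). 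Projecting the expanded equation onto the cokernel (i.e. onto the left null vector of $\mathbb{1} - g_0\Phi'_0$) kills the linear-in-$\mathcal{W}$ contribution and leaves, to leading order, a relation of the schematic form $c_1 \epsilon + c_2 (\text{component of } \mathcal{W})^2 = 0$, forcing $\mathcal{W} \sim -\sqrt{K_-(g_0-g)}\,(\alpha,1)^T$ for a suitable constant $K_-$; this yields both \eqref{eq:Zpm_squareroot} and the relation $K_+ = \alpha^2 K_-$ of \eqref{eq:Kpm}. Matching the coefficients $c_1$ (from $\partial F/\partial g = -\Phi(\mathcal{Z}^0) = -g_0^{-1}\mathcal{Z}^0$) and $c_2$ (from the Hessian of $\Phi$, which involves $\varphi''$) against the left null vector, which is proportional to $(1, \alpha^{-1} \cdot \frac{a^{-1}\varphi'(Z_-^0)}{b^{-1}\varphi'(Z_-^0)})$ up to normalization, gives exactly formula \eqref{eq:Km} after simplification. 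The sign of the square root (rather than $+$) is fixed by Pringsheim positivity: $Z_+(g)$ increases toward $Z_+^0$ as $g \nearrow g_0$.

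The last point requiring care is that the expansion genuinely converges — that \eqref{eq:Zpm_squareroot} is an honest analytic identity in a full slit neighborhood $\{|g - g_0| < r,\ \arg(g-g_0) \neq 0\}$, not merely an asymptotic statement along the real axis. For this I would change variables to $u = \sqrt{g_0 - g}$ and apply the holomorphic implicit function theorem to the reduced system: after splitting $\mathcal{W} = s(\alpha,1)^T + \mathcal{W}^\perp$ with $\mathcal{W}^\perp$ in a complementary subspace, the cokernel projection determines $s$ as an analytic function of $u$ with $s(0) = 0$, $s'(0)^2 = K_-$ (this is where the non-degeneracy $c_2 \neq 0$, equivalently $\alpha^3 a\varphi''(Z_+^0) + b^{-1}\varphi''(Z_-^0) \neq 0$, is essential), while the complementary projection of $\mathbb{1} - g_0\Phi'_0$ is invertible and determines $\mathcal{W}^\perp = O(u^2)$.

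The main obstacle is verifying that non-degeneracy condition $c_2 \neq 0$ — i.e. that the quadratic form appearing after the cokernel projection is not identically zero — since without it the singularity could be of a different type (e.g. a cube root) and the stated square-root form would fail. Here one must use that $\varphi$ is genuinely nonlinear ($p_n \neq 0$ for some $n \geq 2$ guarantees $\varphi'' \not\equiv 0$ and, together with positivity of all quantities $\varphi''(Z_\pm^0) > 0$, $a, b, \alpha > 0$, that the relevant combination $\alpha^3 a \varphi''(Z_+^0) + b^{-1}\varphi''(Z_-^0)$ is strictly positive), so in fact $c_2 > 0$ and $K_- > 0$ automatically, which also retroactively confirms that the square root in \eqref{eq:Zpm_squareroot} is real and well-defined for $g$ near $g_0$ on the real axis.
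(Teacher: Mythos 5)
Your overall strategy is sound, and it is essentially an unpacked, self-contained version of what the paper does by citation: the paper's proof consists of checking that the positive system \eqref{eq:zpzm} is irreducible (because $a,b>0$ and $\varphi'(Z_\pm^0)>0$) and then invoking Theorem 2.33 of \cite{Drmota}, which delivers the square-root representation \eqref{eq:Zpm_squareroot} in a slit disc; the constants $\alpha$, $K_\pm$ are then obtained exactly as you propose, by expanding \eqref{eq:Zp_func_eq}--\eqref{eq:Zm_func_eq} to second order around $(Z_+^0,Z_-^0,g_0)$ and using \eqref{eq:J-zero}. Your Lyapunov--Schmidt reduction (project onto the cokernel of $\mathbb{1}-g_0\Phi'_0$, change variables to $u=\sqrt{g_0-g}$, apply the holomorphic implicit function theorem to the reduced system) is precisely the mechanism inside the cited theorem, so what you gain is independence from the black box and an explicit view of where nonlinearity of $\varphi$ enters; what you lose is brevity. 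Note also that your opening step about aperiodicity and the absence of other singularities on the circle $|g|=g_0$ is not needed here: the proposition only asserts \eqref{eq:Zpm_squareroot} in a slit neighbourhood of $g_0$, and the unique-dominant-singularity argument is carried out by the paper in the \emph{next} proposition, where the transfer theorem requires it.

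Two concrete points need repair. First, your left null vector is wrong as written: from $(l_1,l_2)(\mathbb{1}-g_0\Phi'_0)=0$ one gets
\begin{equation*}
\frac{l_2}{l_1}=\frac{1-g_0\,a\,\varphi'(Z_+^0)}{g_0\,b\,\varphi'(Z_+^0)}=\frac{a^{-1}\varphi'(Z_-^0)}{\alpha\, b\,\varphi'(Z_+^0)}\,,
\end{equation*}
i.e. $l\propto(g_0 b\varphi'(Z_+^0),\,1-g_0 a\varphi'(Z_+^0))$ (the vector $c$ of the paper's Remark following the proposition), whereas you wrote a ratio with $b^{-1}\varphi'(Z_-^0)$ in the denominator. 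Since $K_-$ is computed as a ratio of two pairings against this covector, carrying your formula through literally would give the wrong constant; with the correct $l$ the projection does simplify to \eqref{eq:Km} after using the eigenvector identities $g_0(\alpha a\varphi'(Z_+^0)+a^{-1}\varphi'(Z_-^0))=\alpha$ and $g_0(\alpha b\varphi'(Z_+^0)+b^{-1}\varphi'(Z_-^0))=1$. Second, you assert $\alpha>0$ as if it were part of the general positivity of the data, but the denominator $1-g_0\,a\,\varphi'(Z_+^0)$ in \eqref{eq:alpha} does not have an obvious sign; positivity of $\alpha$ (hence of the kernel direction $(\alpha,1)^T$ and of $K_\pm$) is obtained in the paper from the fact that $1$ is the Perron--Frobenius eigenvalue of the positive matrix $g_0\Phi'_0$, so the associated null vectors have entries of one sign. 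Alternatively you could extract it from monotonicity of both $Z_\pm(g)$ on $(0,g_0)$, but some argument is required, and without it the claim $K_->0$ --- which your non-degeneracy discussion relies on --- is not established.
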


\begin{proof}
The square-root behaviour (\ref{eq:Zpm_squareroot}) is a direct consequence of Thm. 2.33 in \cite{Drmota}, provided that the the matrix $g_0 \Phi'_0$ is irreducible. This follows from the fact that $a,b >0$ and $\varphi'(Z_\pm^0)>0$.

The constants $K_\pm$ are easily obtained by expanding eqs. (\ref{eq:Zp_func_eq})-(\ref{eq:Zm_func_eq}) around $Z_\pm^0$ and using (\ref{eq:J-zero}). The identity in eq. (\ref{eq:alpha}) follows from (\ref{eq:J-zero}).
\end{proof}

\begin{rem} \label{rem:c-alpha}
The transpose of the matrix $g_0 \Phi'_0$ has positive entries and eigenvalue 1 with left eigenvector $c=(c_1 \; c_2)$, where
\begin{equation}
	c_1 = g_0 \, b \, \varphi'(Z_+^0), \quad c_2 = 1 - g_0 \, a \, \varphi'(Z_+^0).
\label{eq:c-def}
\end{equation}
The other eigenvalue is given by
\begin{equation}
	\lambda = \det g_0 \Phi'(Z_+^0,Z_-^0) = 2g_0^2\sinh (2\beta)\varphi'(Z_+^0) \varphi'(Z_-^0).
\label{eq:lambda}
\end{equation}
In particular, we have $\lambda < 1$ by construction and $\lambda > -1$ since
\begin{equation}
	1 + \lambda = g_0 (a \varphi'(Z_+^0) + b^{-1} \varphi'(Z_-^0)) > 0.
\end{equation}
Hence 1 is the Perron-Frobenius eigenvalue of the transpose of $g_0
\Phi'_0$ (cf. \cite{FS} and refs. therein) and we have $c_1,c_2 > 0$
and accordingly $\alpha >0$. 
\end{rem}

The above result allows us to use a standard transfer theorem \cite{FS} to
determine the asymptotic behavior of $Z_{N\pm}(\beta,h)$ for $N \rightarrow \infty$.
In order to simplify the statement of the results we consider only the aperiodic case, where
$\set{n \st p_n >0}$ has greatest common divisor 1.

\begin{prop}
\label{cor:Zn}
Suppose the greatest common divisor of $\set{ n \st p_n >0 }$ is 1. Then, we have
\begin{equation}
	Z_{N\pm}(\beta,h) = \frac{1}{2} \, \sqrt{\frac{g_0 K_\pm}{\pi}} \, g_0^{-N} \, N^{-3/2} (1+o(1))
\label{eq:znpm}
\end{equation}
for $N \rightarrow \infty$, where $g_0$, $K_\pm >0$ are determined by
(\ref{eq:J-zero}-\ref{eq:phiprime-def}) and (\ref{eq:Kpm}-\ref{eq:Km}). 
\label{cor:znpm}
\end{prop}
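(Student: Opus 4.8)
The plan is to apply a standard singularity-analysis transfer theorem to each of the generating functions $Z_\pm(g)$. By the preceding proposition we already know that near its dominant singularity $g_0$ the function $Z_+$ admits the local expansion $Z_+(g) = Z_+^0 - \sqrt{K_+(g_0-g)}$, valid in a slit neighbourhood $\{g : |g-g_0|<r,\ \arg(g-g_0)\neq 0\}$, and similarly for $Z_-$ with constant $K_-$. First I would record that, since the Taylor coefficients $Z_{N\pm}$ are non-negative, Pringsheim's theorem guarantees $g_0$ is a genuine singularity on the circle of convergence, and the aperiodicity hypothesis --- $\gcd\{n : p_n>0\}=1$ --- ensures $g_0$ is the \emph{only} singularity on $|g|=g_0$. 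This last point is the step that genuinely uses the gcd assumption: one argues that any other singularity $g_0 e^{i\theta}$ on the circle would force, via the fixpoint system \eqref{eq:zpzm} and the structure of $\varphi$, a periodicity in the support of $(p_n)$, contradicting $\gcd = 1$. I would make this precise by noting that $\varphi(z)=\sum p_n z^n$ itself has $z=\xi$ as its unique dominant singularity under the gcd condition, and that $Z_\pm$ inherit this through the analytic solution of \eqref{eq:zpzm}; alternatively one invokes the relevant aperiodic version of the Drmota--Lalley--Woods framework directly.

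Next I would verify the analytic-continuation hypothesis of the transfer theorem: $Z_\pm$ must extend holomorphically to a domain of the form $\{g : |g|<g_0+\eta,\ |\arg(g-g_0)|>\delta\}$ (a ``Camembert'' / $\Delta$-domain). This follows because the implicit function theorem applies at every point of $|g|\le g_0$ other than $g_0$ --- the Jacobian $\mathbb{1}-g\,\Phi'(Z_+,Z_-)$ is invertible there, as its vanishing is exactly the condition defining $g_0$ --- and the slit-disc expansion from the proposition handles a punctured neighbourhood of $g_0$ itself; patching these gives continuation to a $\Delta$-domain. With the $\Delta$-analyticity and the singular expansion $Z_\pm(g) = Z_\pm^0 - \sqrt{K_\pm g_0}\,(1-g/g_0)^{1/2} + O\!\big((1-g/g_0)\big)$ in hand, the transfer theorem (\cite{FS}, Thm.~VI.3 and Cor.~VI.1, the $(1-z)^{1/2}$ case) gives
\begin{equation}
	[g^N]\,Z_\pm(g) = -\sqrt{K_\pm g_0}\;\frac{N^{-3/2}}{\Gamma(-1/2)}\,g_0^{-N}\,(1+o(1)).
\end{equation}
Since $\Gamma(-1/2) = -2\sqrt{\pi}$, the prefactor becomes $\tfrac12\sqrt{K_\pm g_0/\pi}$, which is exactly \eqref{eq:znpm}. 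Positivity of $K_\pm$, needed for the square root and for the asymptotics to make sense, is inherited from $\alpha>0$ (Remark~\ref{rem:c-alpha}) together with positivity of $\varphi$, $\varphi'$, $\varphi''$ at $Z_\pm^0$.

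The main obstacle I anticipate is not the transfer-theorem bookkeeping but the verification that $g_0$ is the unique dominant singularity under the aperiodicity assumption, i.e.\ ruling out other singularities on $|g|=g_0$; in the multi-type branching setting this is the content of the Drmota--Lalley--Woods machinery and must be cited carefully (the irreducibility of $g_0\Phi'_0$, already established in the previous proof, is precisely the hypothesis that makes this work). A secondary, more technical point is ensuring the error term in \eqref{eq:Zpm_squareroot} is uniform enough on the $\Delta$-domain to feed into the transfer theorem --- but this is automatic from the holomorphic implicit function theorem applied to the auxiliary equation obtained by treating $\sqrt{g_0-g}$ as the variable, which is exactly how Thm.~2.33 of \cite{Drmota} is proved, so no extra work is required beyond quoting it.
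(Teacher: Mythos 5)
Your overall strategy is the same as the paper's: establish that $g_0$ is the unique singularity on the circle of convergence, continue $Z_\pm$ to a $\Delta$-domain, and feed the square-root expansion \eqref{eq:Zpm_squareroot} into the transfer theorem. The transfer computation itself, including the $\Gamma(-1/2)=-2\sqrt{\pi}$ bookkeeping and the resulting prefactor $\tfrac12\sqrt{K_\pm g_0/\pi}$, is correct and matches \eqref{eq:znpm}, and you correctly identify the uniqueness of the dominant singularity as the crux and as the only place the gcd hypothesis enters.

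However, the concrete argument you offer for that crux does not work. You propose to derive uniqueness from the fact that $\varphi(z)=\sum p_n z^n$ has $z=\xi$ as its unique dominant singularity and that $Z_\pm$ ``inherit'' this through the fixpoint system. But under the genericity condition \eqref{eq:genass} one has $Z_\pm^0<\xi$, so $\varphi$ is analytic in a neighbourhood of $Z_\pm^0$ and the singularity of $Z_\pm$ at $g_0$ has nothing to do with the singularity of $\varphi$ at $\xi$: it is a branch point created by the vanishing of the Jacobian $\det(\mathbb{1}-g\,\Phi')$. The paper's proof fills this gap by a positivity argument: since $\varphi$ and $Z_\pm$ have non-negative Taylor coefficients, $|\det(\mathbb{1}-g\Phi'(Z_+,Z_-))|\geq \det(\mathbb{1}-|g|\Phi'(Z_+(|g|),Z_-(|g|)))>0$ for $|g|<g_0$; on $|g|=g_0$ the determinant can vanish only if $g\varphi'(Z_\pm(g))=g_0\varphi'(Z_\pm(g_0))$, which forces $|Z_\pm(g)|=Z_\pm(g_0)$, i.e.\ equality in the triangle inequality for the series $\sum_N Z_{N\pm}g^N$. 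Since $Z_{N\pm}>0$ precisely for $N=1+n_1+\cdots+n_s$ with $p_{n_i}>0$, this equality forces $g^N=e^{i\theta}g_0^N$ for all such $N$, and the gcd assumption then yields $g=g_0$. Your fallback of ``invoking the aperiodic Drmota--Lalley--Woods framework'' would also close the gap, but as stated it is a citation in place of the argument, and the hypothesis to be checked there is aperiodicity of the system (which reduces exactly to the support condition on $(p_n)$ just described), not aperiodicity of $\varphi$ viewed through its own radius of convergence. With the determinant/positivity argument substituted for your ``inheritance'' claim, the proof is complete and coincides with the paper's.
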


\begin{proof}

We first show that $g=g_0$ is the only singularity of $Z_\pm(g)$ along
the circle of radius $g_0$ centered at 0, hence the functions
$Z_\pm(g)$ can be analytically extended to a disc of radius $g_0 +
\epsilon$, except for a slit from $g_0$ to $g_0 + \epsilon$, for some
$\epsilon>0$. 

From  $\det(\mathbb{1} - g \Phi'(Z_+,Z_-))|_{g=g_0} = 0$ and
$\det(\mathbb{1} - g \Phi'(Z_+,Z_-))|_{g=0} = 1$, we have 
\begin{equation}
		\det(\mathbb{1} - g \Phi'(Z_+,Z_-)) > 0, \qquad  \quad  0 \leq g < g_0.
\end{equation}
Hence
\begin{equation}
	|\det(\mathbb{1} - g \Phi'(Z_+,Z_-))| \geq
        \det\left(\mathbb{1} - |g| \Phi'(Z_+(|g|),Z_-(|g|))\right) > 0
\end{equation}
for $|g|<g_0$, where we have used that $\varphi$ and $Z_\pm$ have 
positive Taylor coefficients. Moreover, in the limiting case $|g|=g_0$ 
we get that $\det(\mathbb{1} - g \Phi'(Z_+,Z_-))=0$ if and only if
\begin{equation}
	g \varphi'(Z_\pm(g)) = g_0 \varphi'(Z_\pm(g_0)).
\end{equation}
In particular, $|\varphi'(Z_\pm(g))|=\varphi'(Z_\pm(g_0))$ which implies
\begin{equation}
	|Z_\pm(g)| = Z_\pm(g_0).
\label{eq:zpmg}
\end{equation}
By the definition of $Z_{N\pm}(\beta,h)$ we have that $Z_{N\pm}(\beta,h) > 0$ for all $N$ of the form
\begin{equation}
	N = 1 + n_1 + n_2 + \cdots + n_s,
\end{equation}
where $n_i$ are such that $p_{n_i} > 0$, $i = 1, \ldots, s$. Hence, eq. (\ref{eq:zpmg}) implies
\begin{equation}
	g^N = e^{i \theta} g_0^N
\end{equation}
for some fixed $\theta \in \mathbb{R}$ and all such $N$. By the assumption on $(p_n)$ this implies $g=g_0$. This proves our claim.

The result is then a direct consequence of Thm. VI.4 in \cite{FS}.
\end{proof}

\subsection{The limiting measure}

For $1 \leq N < \infty$ and fixed $\beta$, $h \in \mathbb{R}$ we define
the probability distributions $\mu_{N\pm}$ on $\Lambda $, supported on $\Lambda_{N\pm}$, by
\begin{equation}
	\mu_{N\pm}(\tau_s) = \frac{1}{Z_{N\pm}} e^{-H(\tau_s)} \rho(\tau), \qquad \tau_s \in \Lambda_{N\pm},
\end{equation} 
such that
\begin{equation}
	\mu_N = \frac{Z_{N+}}{Z_N} \mu_{N+} + \frac{Z_{N-}}{Z_N} \mu_{N-}.
\label{eq:mn-mnp-mnm}
\end{equation}

We shall need the following proposition, that can be obtained by a
slight modification of the proof of Proposition 3.2 in \cite{BD03},
and whose details we omit.

\begin{prop}
Let $K_R$, $R \in \mathbb{N}$, be a sequence of positive numbers. Then the subset
\begin{equation}
	C = \bigcap_{R=1}^\infty \set{ \tau_s \in \Lambda \st |B_R(\tau)| \leq K_R }
\end{equation}
of $\Lambda$ is compact.
\label{prop:compact}
\end{prop}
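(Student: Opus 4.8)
The plan is to prove compactness of $C$ by showing that it is a closed subset of the metric space $\Lambda$ and that it is sequentially compact, exploiting the ultrametric-type structure of $\Lambda$ together with the local finiteness and uniform volume bounds $|B_R(\tau)|\le K_R$. Concretely, I would argue as follows. First I would note that $C$ is closed: if $\tau_s^{(j)}\to\tau_s$ in the metric $d$ of \eqref{eq:metric-Lambda}, then for each fixed $R$ the balls $B_R(\tau^{(j)})$ with the restricted spin configurations stabilize to $B_R(\tau)$ with $s|_{B_R(\tau)}$ for $j$ large, so $|B_R(\tau)|=|B_R(\tau^{(j)})|\le K_R$ eventually, hence $\tau_s\in C$. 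Thus it suffices to prove sequential compactness of $C$.

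For sequential compactness, let $(\tau_s^{(j)})_{j\ge1}$ be a sequence in $C$. The key observation is that for each $R$, the set of possible balls $B_R(\tau)$ together with a spin labeling, subject to $|B_R(\tau)|\le K_R$, is \emph{finite}: a tree of size at most $K_R$ built from locally finite vertices has only finitely many isomorphism types as a plane tree, and there are at most $2^{K_R+1}$ spin assignments on its vertices. Hence by a diagonal argument I can extract a subsequence along which, for every fixed $R$, the decorated ball $\big(B_R(\tau^{(j)}),\,s^{(j)}|_{B_R(\tau^{(j)})}\big)$ is eventually constant, call it $(B_R,\,\bar s_R)$. The compatibility of balls of different radii (the ball of radius $R$ inside the ball of radius $R'\ge R$ is $B_R$) lets me glue these into a single decorated tree $\tau_{\bar s}\in\Lambda$ with $B_R(\tau)=B_R$ and $\bar s|_{B_R(\tau)}=\bar s_R$ for all $R$. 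By construction $d(\tau^{(j)}_{s},\tau_{\bar s})\to0$, and since $|B_R(\tau)|=|B_R|\le K_R$ for every $R$, the limit $\tau_{\bar s}$ lies in $C$. Therefore every sequence in $C$ has a subsequence converging in $C$, so $C$ is (sequentially) compact, and since $\Lambda$ is a metric space this is equivalent to compactness.

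The part requiring the most care is the finiteness claim underlying the diagonal extraction, i.e. that $\{\,(B_R(\tau),s|_{B_R(\tau)}) : \tau_s\in\Lambda,\ |B_R(\tau)|\le K_R\,\}$ is a finite set. This is where local finiteness is essential but not by itself sufficient — one genuinely uses the uniform bound $K_R$ on the number of edges to bound the number of vertices (hence the degree of each vertex and the branching structure), after which the count of plane-tree shapes and of $\{\pm1\}$-labelings is finite. I would phrase this as: a plane tree with at most $K_R$ edges has at most $K_R+1$ vertices, hence at most finitely many such plane trees of height $\le R$ exist, and at most $2^{K_R+1}$ spin configurations on each; then the gluing step and the verification that $d(\tau^{(j)}_s,\tau_{\bar s})\to 0$ are routine consequences of the definition \eqref{eq:metric-Lambda}. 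Since the statement explicitly says the proof is a ``slight modification'' of Proposition 3.2 in \cite{BD03}, in the write-up I would simply indicate these modifications (the additional spin coordinate entering the metric and the finiteness count) rather than reproduce the full argument.
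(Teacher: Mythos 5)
Your argument is correct, and it is essentially the intended one: the paper itself omits the proof and refers to Proposition 3.2 of \cite{BD03}, whose argument is exactly this diagonal extraction over the finitely many decorated balls of size at most $K_R$, with the ``slight modification'' being the extra spin labels that you handle via the factor $2^{K_R+1}$. The only redundancy is that your separate closedness step is subsumed by the sequential-compactness construction, since the glued limit is shown directly to lie in $C$.
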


We are now ready to prove the following main result of this section.

\begin{thm}
\label{thm:wl}
Let $\beta, h \in \mathbb{R}$ and assume that the genericity condition (\ref{eq:genass}) holds and that the greatest common divisor of $\set{ n \st p_n >0 }$ is 1. Then the weak limits
\begin{equation}
	\mu_{\pm} = \lim_{N \rightarrow \infty} \mu_{N\pm} \quad \text{and} \quad \mu = \lim_{N \rightarrow \infty} \mu_N
\label{eq:wl-pm}
\end{equation}
exist as probability measures on $\Lambda$ and
\begin{equation}
	\mu = \frac{\alpha}{1 + \alpha} \, \mu_+ + \frac{1}{1 + \alpha} \, \mu_-,
\label{eq:wl}
\end{equation}
where $\alpha$ is given by (\ref{eq:alpha}).
\end{thm}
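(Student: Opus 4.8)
The plan is to prove the existence of the weak limits $\mu_{N\pm}\to\mu_\pm$ first, and then obtain the statement about $\mu$ from the decomposition \eqref{eq:mn-mnp-mnm} together with the asymptotics of Proposition \ref{cor:Zn}. Since $\Lambda$ is a metric space, by standard arguments it suffices to (i) establish tightness of the sequences $(\mu_{N\pm})_N$ and (ii) show that for a sufficiently rich class of continuous functions the integrals $\int f\,d\mu_{N\pm}$ converge. For tightness I would use Proposition \ref{prop:compact}: one needs to choose a sequence $(K_R)$ so that $\mu_{N\pm}(|B_R(\tau)|>K_R)$ is uniformly small. The expected volume $\langle |B_R(\tau)|\rangle_{\mu_{N\pm}}$ can be bounded uniformly in $N$ using the partition-function asymptotics \eqref{eq:znpm} — concretely, summing over the choice of which vertex in $B_R$ carries the "infinite-size-absorbing'' subtree and using $Z_{M\pm}\sim \mathrm{const}\cdot g_0^{-M}M^{-3/2}$ to control the ratio — and then Markov's inequality gives the tightness with, say, $K_R = R^3 2^R$ or any summable choice.

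For convergence of integrals, the natural approach is to test against cylinder functions, i.e. functions depending only on the ball $B_R(\tau_s)$ of some fixed radius $R$; these are dense enough (they separate points and the algebra they generate is dense in $C(\Lambda)$ by Stone–Weierstrass once combined with tightness) that convergence on them plus tightness yields weak convergence. For a fixed finite decorated ball $\tau_s^0$ of radius $R$, one computes
\begin{equation}
\mu_{N\pm}(\{\tau_s : B_R(\tau_s)=\tau_s^0\}) = \frac{1}{Z_{N\pm}}\, e^{-H(\tau_s^0)}\,\rho(\tau^0)\!\!\sum_{\text{ways to fill}} \prod (\text{branch partition functions}),
\end{equation}
where the sum is over ways of attaching finite decorated trees to the boundary vertices of $\tau_s^0$ so that the total size is $N$. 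Decomposing according to which boundary vertex the "large'' branch grows from and using \eqref{eq:znpm} to extract the leading $g_0^{-N}N^{-3/2}$ behaviour, the ratio with $Z_{N\pm}$ converges; the limit is expressed through the generating functions $Z_\pm(g)$ evaluated at $g_0$, the derivatives $\varphi'(Z_\pm^0)$, and the left Perron–Frobenius eigenvector $c=(c_1,c_2)$ from Remark \ref{rem:c-alpha}. This is where the single-spine picture of Fig. \ref{fig:grt} emerges: exactly one boundary vertex is selected to carry the infinite continuation, with the selection probability governed by the components of $c$, and the constant $\alpha=c_1/c_2$ enters precisely here.

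Finally, once $\mu_\pm=\lim_N\mu_{N\pm}$ are shown to exist, \eqref{eq:wl} follows by taking $N\to\infty$ in \eqref{eq:mn-mnp-mnm}: Proposition \ref{cor:Zn} gives
\begin{equation}
\frac{Z_{N+}}{Z_N}=\frac{Z_{N+}}{Z_{N+}+Z_{N-}}\longrightarrow \frac{\sqrt{K_+}}{\sqrt{K_+}+\sqrt{K_-}}=\frac{\alpha}{1+\alpha},
\end{equation}
using \eqref{eq:Kpm}, and similarly $Z_{N-}/Z_N\to 1/(1+\alpha)$; since $\mu_{N\pm}$ converge weakly and the coefficients converge, the convex combination converges weakly to the stated mixture, and in particular $\mu=\lim_N\mu_N$ exists. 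The main obstacle I anticipate is the bookkeeping in step (ii): correctly organizing the sum over attachments of branches to the $R$-ball, identifying that in the $N\to\infty$ limit exactly one branch becomes infinite, and matching the combinatorial factors to the eigenvector $c$ and to $\alpha$. The tightness step is comparatively routine given Proposition \ref{prop:compact} and the volume estimate, and the passage from $\mu_\pm$ to $\mu$ is immediate once the asymptotics are in hand.
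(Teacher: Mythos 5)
Your proposal follows essentially the same route as the paper's proof: tightness via Proposition \ref{prop:compact} controlled by the asymptotics \eqref{eq:znpm}, convergence on cylinder events $\set{B_R(\tau)=\hat\tau,\ s|_{V(\hat\tau)}=\hat s}$ by summing over which boundary vertex of the $R$-ball carries the infinite continuation, and then \eqref{eq:wl} from \eqref{eq:mn-mnp-mnm} together with $Z_{N+}/Z_{N-}\to\sqrt{K_+/K_-}=\alpha$. One small correction: $\alpha$ is the component ratio of the \emph{right} eigenvector of $g_0\Phi'_0$ for eigenvalue $1$ (equivalently $\sqrt{K_+/K_-}$), not $c_1/c_2$, and the weights selecting the surviving branch in the limit \eqref{eq:vol} are $\sqrt{K_{\hat s(v_i)}}\,\varphi'(Z^0_{\hat s(v_i)})$ rather than the components of $c$.
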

\begin{proof}
The identity (\ref{eq:wl}) follows immediately from (\ref{eq:mn-mnp-mnm}), Corollary \ref{cor:znpm} and (\ref{eq:Kpm}),
provided $\mu_\pm$ exist. Hence it suffices to show that $\mu_+$ exists
(since existence of $\mu_-$ follows by identical arguments).

According to \cite{BD03}, it is sufficient to prove that the sequence
$(\mu_{N+})$ satisfies a certain  \textit{tightness condition} (see
e.g. \cite{PB68} for a definition) and that the sequence 
\begin{equation}
	\mu_{N+}(\set{ \tau_s \st B_R(\tau) = \hat{\tau}, s|_{V(\hat{\tau})} = \hat{s}})
\end{equation}
is convergent in $\mathbb{R}$ as $N \to \infty$, for each finite tree $\hat{\tau} \in \mathcal{T}$ and fixed spin configuration $\hat{s}$.
\smallskip

\noindent\textit{Tightness of $(\mu_{N+})$}: As a consequence of
Proposition \ref{prop:compact}, this condition holds if we
show that for each $\epsilon > 0$ and $R \in \mathbb{N}$ there exists $K_R >0$
such that
\begin{equation}
	\mu_{N+}(\set{ \tau_s \st |B_R(\tau)| > K_R }) < \epsilon, \quad N \in \mathbb{N}.
\label{eq:tight}
\end{equation}
For $R=1$ this is trivial. For $R=2$, $k \geq 1$ we have
\begin{equation}
	\begin{split}
		\mu_{N+} & (\set{ \tau_s \st |B_2(\tau)| = k+1 })\\
		 & = Z_{N+}^{-1} \sum_{N_1 + \cdots + N_k = N-1} \left[ a \prod_{i=1}^kZ_{N_i+} + a^{-1}\prod_{i=1}^k Z_{N_i-}\right]p_k\\
		 & \leq k \, \sum_{\substack{N_1 + \cdots + N_k = N-1 \\ N_1 \geq (N-1)/k}} Z_{N+}^{-1} \left[ a \prod_{i=1}^{k} Z_{N_i+} + a^{-1} \prod_{i=1}^{k} Z_{N_i-} \right] p_k\\
		 & \leq cst. \; k^{5/2} \left[ Z_+(g_0)^{k-1} + Z_-(g_0)^{k-1} \right] p_k\,,
	\end{split}
\end{equation}
where we have used (\ref{eq:znpm}). The last expression tends to zero for $k \to \infty$ as
a consequence of (\ref{eq:genass}). This proves (\ref{eq:tight}) for $R=2$.

For $R>2$ it is sufficient to show
\begin{equation}
	\mu_{N+}(\set{ \tau_s \st |B_{R+1}(\tau)| > K, B_R(\tau)=\hat{\tau}, s|_{V(\hat{\tau})} = \hat{s} }) \rightarrow 0
\label{eq:tightr}
\end{equation}
uniformly in $N$ for $k \rightarrow \infty$, for fixed $\hat{\tau}$ of height $R$ and fixed
$\hat{s} \in \set{ \pm 1 }^{V(\hat{\tau})}$, as well as fixed $K>0$. Let $L$ denote the number of vertices in $\hat{\tau}$
at maximal height $R$. Any $\tau \in \Lambda$ with $B_R(\tau) = \hat{\tau}$ is
obtained by attaching a sequence of trees $\tau_1, \ldots, \tau_S$ in
$\Lambda$ such that the root vertex of $\tau_i$ is identified with a vertex at maximal
height in $\hat{\tau}$. We must then have
\begin{equation}
	|\tau_1| + \cdots + |\tau_S| = |\tau| - |\hat{\tau}|
\end{equation}
and
\begin{equation}
	k_1 + \cdots + k_L = S,
\end{equation}
where $k_i \geq 0$ denotes the number of trees attached
to vertex $v_i$ in $\hat{\tau}$, $i=1,\ldots,L$. For fixed $k_1, \ldots, k_L$ we get a contribution to
(\ref{eq:tightr}) equal to
\begin{equation}
	\begin{split}
		Z_{N+}^{-1} & \sum_{N_1+\cdots+N_S = N-|\hat{\tau}|} \left( \prod_{i=1}^{L} \prod_{j=1}^{k_i}(Z_{N_{k_1 + \cdots + k_{i-1}+j} \, \hat{s}_{v_i}})^{k_i} p_{k_i}\right) e^{-H(\hat{\tau}_{\hat{s}})} \prod_{v \in V(\hat{\tau})\setminus \set{r,v_1,\ldots,v_L}} p_{\sigma_v-1}\\
		& \leq \text{const} \prod_{i=1}^L (\max Z_\pm^0)^{k_i} p_{k_i} \, (k_i+1)^{5/2}		
	\end{split}
\label{eq:tightr-ineq}
\end{equation} 
where the inequality is obtained as above for $R=2$ and the constant is independent of $k_1,\ldots,k_L$.

Since
\begin{equation}
	|B_{R+1}(\tau)| = |\hat{\tau}| + k_1 + \cdots + k_L >K
\end{equation}
and the number of choices of $k_1, \ldots, k_L \geq 0$ for fixed
$k = k_1 + \cdots + k_L$ equals
\begin{equation}
	\binom{k + L-1}{L-1} \leq \frac{k^{L-1}}{(L-1)!}
\end{equation}
the claim (\ref{eq:tightr}) follows from (\ref{eq:genass}) and (\ref{eq:tightr-ineq}).
\smallskip

\noindent \textit{Convergence of} $\mu_{N+}(\set{ \tau_s \st B_R(\tau) = \hat{\tau}, s|_{V(\hat{\tau})} = \hat{s} })$:
Using the decomposition of $\tau$ into $\hat{\tau}$ with
branches described above and using the arguments in
the last part of the proof of Theorem 3.3 in \cite{BD03} we get, with notation as above, that
\begin{equation}\label{eq:vol}
	\begin{split}
		\mu_{N\pm}&(\set{ \tau_s \st B_R(\tau) = \hat{\tau}, s|_{V(\hat{\tau})} = \hat{s} })\\
		&\xrightarrow{N\rightarrow \infty} \frac{g_0^{|\hat{\tau}|}}{\sqrt{K_\pm}} \, e^{-H(\hat{\tau}_{\hat{s}})} \sum_{i=1}^L \sqrt{K_{\hat{s}(v_i)}} \varphi'(Z_{\hat{s}(v_i)}^0) \prod_{j \neq i} \varphi(Z_{\hat{s}(v_j)}^0),
	\end{split}
\end{equation}
provided $\hat{s}(r)= \pm 1$ (if $\hat{s}(r)= \mp 1$ the limit is trivially 0).
\end{proof}

\bigskip

Introducing the notation
\begin{equation}
	A(\hat{s}) = \set{ \tau_s \st B_R(\tau) = \hat{\tau}, \, s|_{V(\hat{\tau})} = \hat{s}}\,,
\end{equation}
where $\hat{\tau}$ is a finite tree of height $R$ with spin
configuration $\hat{s}$, and using (\ref{eq:Kpm}), it follows from \eqref{eq:vol} that the $\mu_\pm$-volumes of this set are given by 
\begin{equation}
	\mu_\pm (A(\hat{s})) = g_0^{|\hat{\tau}|} \, e^{-H(\hat{\tau}_{\hat{s}})} \sum_{i=1}^L \alpha^{(\hat{s}(v_i) \mp 1)/2} \, \varphi'(Z_{\hat{s}(v_i)}^0) \prod_{j \neq i} \varphi(Z_{\hat{s}(v_j)}^0),
\end{equation}
if $\hat{s}(r) = \pm 1$ and where $v_1, \ldots, v_L$ are the vertices
at maximal distance from the root $r$ in $\hat{\tau}$.

The above calculations show, by similar arguments as in \cite{BD03,CD}, that the limiting measures $\mu_\pm$ are concentrated
on trees with a single infinite path starting at $r$, called the
\emph{spine}, and attached to each spine vertex $u_i$, $i=1,2,3\dots$,
is a finite number $k_i$ of finite trees, called \emph{branches}, some
of which are attached to the left and some to the right as seen from the root, cf. Fig.\ref{fig:grt}.

The following corollary provides a complete description of the limiting measures $\mu_\pm$.

\begin{cor}\label{cor:measures-description}
The measures $\mu_\pm$ are concentrated on the sets
\begin{equation}
	\bar{\Lambda}_{\pm} = \set{ \tau_s \in \Lambda_\pm \st \tau \, \text{has a single spine} },
\end{equation}
respectively, and can be described as follows:
\begin{enumerate}[i)]
	\item The probability that the spine vertices $u_0 = r$, $u_1,u_2,\ldots,u_N$
	have \, $k'_1,\ldots,k'_N$ left branches and $k''_1,\ldots,k''_N$ right branches
	and spin values $s_0 = \pm 1$, $s_1, s_2, \ldots, s_N$, respectively, equals
		\begin{equation}
			\begin{split}
				 \rho^{s_0}_{k'_1,\ldots,k'_N,k''_1,\ldots,k''_N} & (s_0,\ldots,s_N)\\
				& = g_0^N e^{-H_N} \left( \prod_{i=1}^N (Z_{s_i}^0)^{k'_i+k''_i} p_{k'_i+k''_i+1} \right) \alpha^{(s_N - s_0)/2},
			\end{split}
		\label{eq:prob-spine}
		\end{equation}
	with
	\begin{equation}
		H_N = - \beta \sum_{i=1}^N s_{i-i} s_i - h \sum_{i=1}^N s_i.
	\end{equation}
	\item The conditional probability distribution of any
	finite branch $\tau_s$ at a fixed $u_i$, $1 \leq i \leq N$, given $k'_1,\ldots,k'_N$, 
	$k''_1,\ldots,k''_N$, $s_0,\ldots,s_N$ as above, is given by
	\begin{equation}
		\nu_{s_i}(\tau_s) = (Z_{s_i}^0)^{-1} \, g_0^{|\tau|} \, e^{-H(\tau_s)} \prod_{v \in V(\tau) \setminus u_i} p_{\sigma_v-1}
	\label{eq:prob-branch}
	\end{equation}
	for $s(u_i)=s_i$, and 0 otherwise.
	\item The conditional distribution of the infinite branch at $u_N$, given
	$k'_1, \ldots,$ $k'_N$, $k''_1,\ldots,k''_N$, $s_0,\ldots,s_N$, equals $\mu_{s_N}$.
\end{enumerate}
\end{cor}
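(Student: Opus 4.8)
The plan is to derive the three statements of the corollary from the limiting volume formula \eqref{eq:vol} together with the measure decomposition \eqref{eq:mn-mnp-mnm} and the Markovian structure of the finite-volume weights. First I would observe that $\mu_\pm$ being concentrated on $\bar\Lambda_\pm$ is really a restatement of what the tightness argument in the proof of Theorem \ref{thm:wl} already yields: the estimates \eqref{eq:tightr-ineq} together with genericity \eqref{eq:genass} force, for $\mu_\pm$-almost every $\tau$, that each ball $B_R(\tau)$ has exactly one vertex at maximal height with infinitely many descendants, and hence that $\tau$ has a single infinite path. I would spell this out by noting that \eqref{eq:vol} assigns mass only to the $L$ configurations in which the surviving branch sits at one of the maximal-height vertices $v_1,\dots,v_L$, and that iterating this over $R$ produces the spine-plus-branches picture of Fig.~\ref{fig:grt}. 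The left/right labelling of branches is inherited from the plane-tree structure and carries no extra weight, so it only contributes combinatorial multiplicities, which I would keep track of explicitly.

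Next, for part (i), I would compute the $\mu_\pm$-measure of the cylinder set specifying the first $N$ spine vertices, their spins $s_0,\dots,s_N$, and the numbers $k'_i,k''_i$ of left and right branches, by summing \eqref{eq:vol}-type contributions over all finite trees $\hat\tau$ of height $N$ realizing this data and then letting the branch sizes range freely. Concretely, the finite-volume weight of such a configuration factorizes as $g_0^{N}$ times the spine Boltzmann factor $e^{-H_N}$ times, for each spine vertex $u_i$, a factor $p_{k'_i+k''_i+1}$ (the branching weight at $u_i$, whose degree is $k'_i+k''_i+1$ plus one spine edge — here I must be careful about the exact degree count and the root convention) and a product over the $k'_i+k''_i$ attached branches. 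Summing each attached-branch contribution over all finite decorated trees rooted at a vertex of fixed spin $s_i$ produces exactly $\varphi(Z^0_{s_i})$-type sums; combined with the $\sqrt{K_{s_i}}$ and $\varphi'$ factors coming from the one distinguished (spine-continuing) branch at $u_N$, and using \eqref{eq:Kpm} to convert $\sqrt{K_\pm}$ ratios into powers of $\alpha$, this should collapse to the stated expression \eqref{eq:prob-spine} with the factor $\alpha^{(s_N-s_0)/2}$. I would present this as a direct limit computation paralleling the derivation of $\mu_\pm(A(\hat s))$ given just before the corollary.

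For parts (ii) and (iii), I would argue that the finite-volume measure $\mu_{N\pm}$, conditioned on the data $B_R(\tau)=\hat\tau$ with a prescribed surviving vertex, makes the individual branches (other than the one continuing the spine) independent, each distributed according to the normalized finite-tree weight $(Z^0_{s})^{-1} g_0^{|\tau|} e^{-H(\tau_s)}\prod p_{\sigma_v-1}$ — this is exactly \eqref{eq:prob-branch} — while the single surviving sub-tree rooted at $u_N$ carries, by self-similarity of the construction, the same limiting law $\mu_{s_N}$, giving (iii). The cleanest route is to note that the partition-function asymptotics \eqref{eq:znpm} imply $Z_{N_i\pm}/Z_{N\pm}$-ratios converge so that a single branch absorbs the "large" size $N$ while the others stay finite with the advertised Gibbs weights; the normalization $(Z^0_s)^{-1}$ appears precisely because $\sum_\tau g_0^{|\tau|}e^{-H(\tau_s)}\prod p_{\sigma_v-1} = Z_s^0$ by \eqref{eq:Zp_func_eq}–\eqref{eq:Zm_func_eq}. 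I would close by remarking that consistency of \eqref{eq:prob-spine}–\eqref{eq:prob-branch} as a probability measure — i.e. that summing \eqref{eq:prob-spine} over all $s_N, k'_N, k''_N$ reproduces the $N-1$ version — follows from the fixpoint equations and \eqref{eq:alpha}.

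The main obstacle, I expect, is bookkeeping rather than any deep point: correctly matching the degree/branching-weight indices on spine vertices (the $p_{k'_i+k''_i+1}$ factor, with its dependence on the root convention that the root has order $1$), and correctly tracking how the $\sqrt{K_\pm}$, $\varphi'(Z^0_\pm)$ and $\varphi(Z^0_\pm)$ factors in \eqref{eq:vol} redistribute when one passes from a single ball $B_R$ to the full spine of length $N$ and uses \eqref{eq:Kpm} to eliminate square roots in favour of $\alpha$. Since the corollary is explicitly flagged as following "by similar arguments as in \cite{BD03,CD}", I would keep the exposition at the level of indicating these identifications and the limit \eqref{eq:vol}, and refer to \cite{BD03} for the measure-theoretic details of disintegrating $\mu_\pm$ into the spine law and the conditional branch laws.
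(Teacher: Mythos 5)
Your proposal is correct and follows essentially the same route as the paper, which in fact states the corollary without a separate proof: it relies precisely on the cylinder-set limit \eqref{eq:vol}, the resulting formula for $\mu_\pm(A(\hat{s}))$ with the $\sqrt{K_\pm}$ ratios rewritten via \eqref{eq:Kpm} as powers of $\alpha$, and the arguments of \cite{BD03,CD} for the single-spine concentration and the disintegration into spine law, independent finite-branch laws normalized by $Z^0_{s}$, and the self-similar infinite branch. Your bookkeeping of the $p_{k'_i+k''_i+1}$ factors, the $\varphi$, $\varphi'$ redistribution, and the consistency check via \eqref{eq:alpha} all match the intended derivation.
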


\section{Hausdorff and spectral dimensions}\label{sec:dim}

In this section we determine the values of the Hausdorff and spectral
dimensions of the ensemble of trees $(\mathcal{T},\bar{\mu})$ obtained from
$(\Lambda,\mu)$ by integrating over the spin degrees of freedom, that is
\begin{equation}
	\bar{\mu}(A) = \mu( \set{ \tau_s \st \tau \in A } )
\end{equation}
for $A \subseteq \mathcal{T}$. Note that the mapping 
$\tau_s \rightarrow \tau$ from $\Lambda$ to $\mathcal{T}$ is a 
contraction w.\,r.\,t. the metrics (\ref{eq:metric-Lambda}) and
(\ref{eq:metric-T}).

Most of the arguments in this section are based on the methods of
\cite{DJW2}, and we shall mainly focus on the novel ingredients
that are needed and otherwise refer to \cite{DJW2} for additional details. 

\subsection{The annealed Hausdorff dimension}
\label{sec:haus-d}

\begin{thm} \label{thm:Haus-dim}
Under the assumptions of Theorem \ref{thm:wl} the annealed Hausdorff dimension 
of $\bar{\mu}$ is 2 for all $\beta$, $h$:
\begin{equation}
	\bar{d}_h = \lim_{R \rightarrow \infty} \frac{\ln  \left< |B_R| \right>_{\bar{\mu}}}{\ln R} = 2\,.
\end{equation}
\end{thm}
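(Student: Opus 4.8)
The plan is to compute $\langle |B_R|\rangle_{\bar\mu}$ by decomposing the ball $B_R(\tau)$ of a $\mu$-typical tree into the contribution of the spine vertices $u_0,\dots,u_R$ together with the truncated branches hanging off $u_0,\dots,u_{R-1}$. By the explicit description of $\mu$ in Corollary \ref{cor:measures-description}, conditioning on the spin values $s_0,\dots,s_R$ along the spine and on the branch multiplicities, the branches at $u_i$ are independent finite trees distributed according to $\nu_{s_i}$, and the ball of radius $R$ sees only the part of each branch within distance $R-i$ of $u_i$. Thus
\begin{equation}
	\langle |B_R|\rangle_{\bar\mu} = (R+1) + \sum_{i=0}^{R-1} \big\langle\, (\text{number of branches at } u_i)\cdot \langle |B_{R-i}(\beta_s)|\rangle_{\nu_{s_i}}\,\big\rangle,
\end{equation}
where $\langle |B_m(\cdot)|\rangle_{\nu_\pm}$ denotes the expected size of a ball of radius $m$ in a single finite branch governed by $\nu_\pm$, and the outer expectation is over the spine spins and multiplicities. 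The first step, therefore, is to set up this decomposition carefully, using the generating-function identities (\ref{eq:zpzm}) to evaluate the expected branch multiplicity at a spine vertex of given spin, which is $\sum_k k\, (Z_{s_i}^0)^{k-1} p_{k+1}/(\text{normalization})$, a finite constant depending only on $\beta,h$ and $s_i$.

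The heart of the matter is then the asymptotics of $b_m^\pm := \langle |B_m(\tau_s)|\rangle_{\nu_\pm}$ as $m\to\infty$, i.e. the mean volume of a ball of radius $m$ in a single $\nu_\pm$-branch. This is exactly the generic-tree computation of \cite{DJW2} adapted to the two-type (spin $\pm$) setting. I would introduce the generating functions $f_\pm(m,g)$ that count finite branches rooted at a $\pm$ vertex weighted by $g^{|\tau|}e^{-H}$ with a marked vertex at height $\le m$, show they satisfy a linear recursion in $m$ obtained by peeling one generation off the root — the recursion matrix being essentially $g_0\Phi_0'$ evaluated at $(Z_+^0,Z_-^0)$ — and extract $b_m^\pm$ by differentiating at $g=g_0$. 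Because the relevant matrix $g_0\Phi_0'$ has Perron eigenvalue $1$ (this is precisely the genericity relation (\ref{eq:J-zero})) and second eigenvalue $\lambda$ with $|\lambda|<1$ (Remark \ref{rem:c-alpha}), the recursion is on the borderline of criticality and one obtains linear growth $b_m^\pm \sim c_\pm m$ for constants $c_\pm>0$; the subleading eigenvalue contributes only a bounded correction. Summing over $i$ in the decomposition above then gives $\langle |B_R|\rangle_{\bar\mu}\sim c\,R^2$, hence $\bar d_h = 2$.

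The main obstacle I anticipate is controlling the ball-volume generating function for a single branch precisely enough to get the linear-in-$m$ asymptotics with uniform error, rather than just an order-of-magnitude bound. Concretely, one must justify that differentiating the fixed-point relation in $g$ and then iterating the height recursion commutes with the $m\to\infty$ limit, and that the contribution of branches whose total size is large (so that the transfer theorem of Proposition \ref{cor:Zn} applies only in an asymptotic sense) does not spoil the estimate — this is where the genericity condition (\ref{eq:genass}), guaranteeing $Z_\pm^0 < \xi$ so that $\varphi(Z_\pm^0)$ and $\varphi''(Z_\pm^0)$ are finite, is used in an essential way. Once the single-branch estimate $b_m^\pm = c_\pm m (1+o(1))$ is in hand, assembling the double sum and checking that $\sum_{i=0}^{R-1}(R-i)$ produces the $R^2$ behaviour (and that the "$+(R+1)$" from the spine and the $o$-terms are negligible) is routine. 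I would also note at the outset that for $h=0$ the result already follows from \cite{DJW2} by Remark \ref{rem:zero_field}, so the content is the extension to $h\ne 0$, where the two spin types genuinely interact through the off-diagonal entries of $\Phi_0'$.
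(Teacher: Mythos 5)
Your proposal is correct and follows essentially the same route as the paper: decompose the ball into the spine plus truncated branches, derive a linear height-recursion for the expected generation sizes in a $\nu_\pm$-branch with matrix $g_0\Phi_0'$, and use that its Perron eigenvalue is $1$ to get linear growth of $\left<|B_m|\right>_{\nu_\pm}$ and hence $\left<|B_R|\right>_{\bar\mu}\asymp R^2$. The only difference is that the paper sidesteps the uniformity issues you anticipate by working directly with $f_R^\pm=\left<|D_R|\right>_{\nu_\pm}Z_\pm^0$ and the left eigenvector $c$ of $g_0\Phi_0'$, which yields the exact conservation law $c_1f_R^++c_2f_R^-=\mathrm{const}$ and thus two-sided bounds $k_1\le\left<|D_R|\right>_{\nu_\pm}\le k_2$ — amply sufficient for the logarithmic limit defining $\bar d_h$, with no need for sharp asymptotics or transfer-theorem error control.
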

\begin{proof}
Consider the probability distribution $\nu_\pm$ on 
$\set{ \tau_s \st \tau \, \text{is finite} }$ given by (\ref{eq:prob-branch}) and
denote by $D_R(\tau)$ the set of vertices at distance $R$ from the root 
in $\tau$. For a fixed branch $T$, we set 
\begin{equation}
	f_R^{\pm} = \left< |D_R| \right>_{\nu_\pm} \, Z_\pm^0.
\end{equation}
where $\langle \cdot \rangle_{\nu_\pm}$ denotes the  expectation value w.r.t. $\nu_\pm$. Arguing as in the derivation of (\ref{eq:zpzm}), we find 
\begin{equation}
	\begin{cases}
		f_R^+ = g_0 \, \left(a \, \varphi'(Z_+^0) \, f_{R-1}^+ + a^{-1} \, \varphi'(Z_-^0) \, f_{R-1}^-\right)\\
		f_R^- = g_0 \, \left(b \, \varphi'(Z_+^0) \, f_{R-1}^+ + b^{-1} \, \varphi'(Z_-^0) \, f_{R-1}^-\right) \,,
	\end{cases}
\label{eq:DR_branch_sys}
\end{equation}
for $R\geq 2$, and $f_1^{\pm} = Z_\pm^0$.
Using that $c$, given by (\ref{eq:c-def}), is a left eigenvector of $g_0 \Phi'_0$ with eigenvalue 1, this implies
\begin{equation}
	\begin{split}
		c_1 \, f_R^+ + c_2 \, f_R^- & = c_1 \, f_{R-1}^+ + c_2 \, f_{R-1}^- = \ldots \\ 
		& = c_1 \, f_1^+ + c_2 \, f_1^- = c_1 \, Z_+^0 + c_2 \, Z_-^0\,.
	\end{split}	
\end{equation}
Since $c_1$, $c_2$, $Z_\pm^0$, $f_R^\pm > 0$, we conclude that
\begin{equation}
	k_1 \leq \left< |D_R| \right>_{\nu_\pm} \leq k_2, \quad R \geq 1\,,
\label{eq:fR-bound}
\end{equation}
where $k_1$, $k_2$ are positive constants (depending on $\beta$, $h$). Using
\begin{equation}
	\left< |B_R| \right>_{\nu_\pm} = \sum_{R'=0}^R \left< |D_{R'}| \right>_{\nu_\pm}
\end{equation}
we then obtain
\begin{equation}
\label{eq:ballR-bound}
	1 + k_1 \, R \leq \left< |B_R| \right>_{\nu_\pm} \leq 1+ k_2 \, R,
\end{equation}
Finally, it follows from (\ref{eq:prob-spine}) that
\begin{equation}\label{volball}
	1+R + k_1 \, \frac{1}{2} R(R+1) \leq \left< |B_R| \right>_\mu \leq 1 + R + k_2 \, \frac{1}{2} R(R+1)\, ,
\end{equation}
which proves the claim.
\end{proof}

\begin{rem}
By a more elaborate argument, using the methods of \cite{DJW2,DJW3}, one can
show that the Hausdorff dimension $d_h$ defined by \eqref{Haus1}
exists and equals $2$ almost surely, that is for all trees $\tau\in
\cal{T}$ except for a set of vanishing $\bar\mu$-measure. We shall not
make use of this result below and refrain from giving further details
in this paper.
\end{rem}

\subsection{The annealed spectral dimension}
\label{sec:spect-d}

In this section we first establish two results needed for determining the spectral dimension. The first one is a version of a classical result, proven by Kolmogorov for Galton-Watson trees \cite{Harris}, on survival probabilities for $\nu_\pm$.

\begin{prop}
\label{prop:Kolm-gen}
The measures $\nu_\pm$ defined by (\ref{eq:prob-branch}) fulfill
\begin{equation}
\frac{k_-}{R}\leq	\nu_\pm(\set{\tau_s \in \Lambda\st D_R(\tau) \neq \emptyset}) \leq \frac{k_+}{R}, \quad R \geq 1,
\end{equation}
where $k_\pm>0$ are constants depending on $\beta, h$.
\end{prop}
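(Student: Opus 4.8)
The plan is to adapt Kolmogorov's classical argument for the extinction/survival probability of a critical Galton--Watson process, using the square-root singularity structure already established in this paper. First I would introduce the generating-function-type quantities
\begin{equation}
	q_R^{\pm} = \nu_\pm(\set{\tau_s \st D_R(\tau) = \emptyset}),
\end{equation}
i.e. the probability under $\nu_\pm$ that the branch has height $< R$, and let $u_R^{\pm} = 1 - q_R^{\pm}$ be the survival probabilities we wish to bound. Arguing exactly as in the derivation of (\ref{eq:zpzm}) --- decomposing a finite branch according to the spin and degree of the root's neighbour --- one sees that the $q_R^{\pm}$ satisfy a closed recursion of the form
\begin{equation}
	\begin{cases}
		(Z_+^0)\,q_R^+ = g_0\,\left(a\,\varphi(Z_+^0 q_{R-1}^+) + a^{-1}\,\varphi(Z_-^0 q_{R-1}^-)\right)\\
		(Z_-^0)\,q_R^- = g_0\,\left(b\,\varphi(Z_+^0 q_{R-1}^+) + b^{-1}\,\varphi(Z_-^0 q_{R-1}^-)\right),
	\end{cases}
\label{eq:qR-rec}
\end{equation}
with the appropriate initial condition at $R=1$ (the height-$0$ branch being just the root, so $q_1^{\pm}=0$ and $q_R^{\pm}\nearrow 1$). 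The fixed point of this map is $q^{\pm}=1$, consistent with (\ref{eq:Zp_func_eq})--(\ref{eq:Zm_func_eq}), and genericity (\ref{eq:genass}) guarantees $Z_\pm^0 q_{R-1}^{\pm}$ stays strictly inside the disc of convergence of $\varphi$ so everything is analytic.

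Next I would linearise (\ref{eq:qR-rec}) near the fixed point by writing $q_R^{\pm} = 1 - u_R^{\pm}$ and expanding $\varphi$ to second order. Because $g_0\Phi'_0$ has Perron--Frobenius eigenvalue exactly $1$ with strictly positive left eigenvector $c = (c_1\;c_2)$ from Remark \ref{rem:c-alpha}, the linear part of the recursion for $(u_R^+,u_R^-)$ has spectral radius $1$, so the iteration is \emph{critical} and the leading linear term cannot by itself produce decay. Pairing with $c$ kills the linear term's leading contribution and leaves
\begin{equation}
	c_1 Z_+^0 u_R^+ + c_2 Z_-^0 u_R^- = c_1 Z_+^0 u_{R-1}^+ + c_2 Z_-^0 u_{R-1}^- - \tfrac{1}{2}\,Q(u_{R-1}^+,u_{R-1}^-) + O(\|u_{R-1}\|^3),
\end{equation}
where $Q$ is a positive-definite quadratic form coming from the $\varphi''(Z_\pm^0)$ terms (positive since $\varphi''>0$ and $a,b,c_i>0$). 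Writing $w_R = c_1 Z_+^0 u_R^+ + c_2 Z_-^0 u_R^-$ and using that, along the critical direction, $u_R^+$ and $u_R^-$ are comparable to each other and hence to $w_R$ (a consequence of the positivity of all coefficients and a standard two-sided comparison with the dominant eigenvector), one gets $w_R - w_{R+1} \asymp w_R^2$, i.e. $w_{R+1} = w_R - c\,w_R^2(1+o(1))$ with $c>0$. This is precisely the recursion whose solution satisfies $w_R \asymp 1/R$; the standard trick of examining $1/w_{R+1} - 1/w_R \to c$ gives both the upper and lower bound $k_-/R \le w_R \le k_+/R$, and translating back through the comparability of $u_R^\pm$ with $w_R$ and the identity $\nu_\pm(D_R \ne \emptyset) = u_R^\pm$ yields the claim.

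The main obstacle I anticipate is the two-sided comparison $u_R^+ \asymp u_R^-$ needed to reduce the vector recursion to the scalar Riccati-type one: a priori the two components could decay at different rates, and one must use irreducibility/positivity of $g_0\Phi'_0$ together with the strict positivity of the Perron eigenvector to show the ratio $u_R^+/u_R^-$ stays bounded away from $0$ and $\infty$ (it in fact converges to $\alpha$, the same constant as in (\ref{eq:alpha}), which is reassuring and can be used as a consistency check). Once that comparison is in hand, the remaining steps --- extracting the quadratic term, pairing with $c$, and solving the scalar recursion --- are routine, and mirror the Galton--Watson argument of \cite{Harris} closely enough that one can largely cite it; I would spell out only the modifications forced by having a $2\times2$ system with the spin-dependent weights $a,b$.
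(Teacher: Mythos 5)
Your proposal is correct and follows essentially the same route as the paper: the paper introduces the generating functions $H_R^\pm(w)$ satisfying the recursion (\ref{eq:HR-system}) and works with $\Delta_R^\pm = Z_\pm^0 - H_R^\pm(0)$, which are exactly your $Z_\pm^0 u_R^\pm$, then expands to second order, pairs with the left Perron eigenvector $c$ to get the scalar Riccati-type recursion $L_R = L_{R-1} - O(L_{R-1}^2)$, and inverts to get $1/L_R \asymp R$. The two-sided comparison $u_R^+ \asymp u_R^-$ that you flag as the main obstacle is indeed needed to pass from bounds on $L_R$ back to bounds on each component (the paper dispatches it with ``Evidently''), and it is resolved exactly as you suggest, by strict positivity of the entries of $g_0\Phi'_0$; the only small slip is that the limiting ratio is $\alpha Z_-^0/Z_+^0$ rather than $\alpha$, since the right (not left) Perron eigenvector governs the direction of $\Delta_R$, but this does not affect the argument.
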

\begin{proof}
Let $H_R^\pm(w)$ be the generating function for the distribution of $|D_R|$ w.r.t. $\nu_\pm$,
\begin{equation}
	H_R^\pm(w) = Z_\pm^0 \sum_{n=0}^\infty \nu_\pm(\set{\tau_s \st |D_R(\tau)| = n}) \, w^n.
\end{equation}
Arguing as in the proof of (\ref{eq:zpzm}), we have
\begin{equation}
\label{eq:HR-system}
\begin{cases}
\begin{aligned}
	H_R^+ & = g_0 \left( a \, \varphi(H_{R-1}^+) + a^{-1} \, \varphi(H_{R-1}^-) \right)\\
	H_R^- & = g_0 \left( b \, \varphi(H_{R-1}^+) + b^{-1} \, \varphi(H_{R-1}^-) \right)\,,
\end{aligned}
\end{cases}
\end{equation}
for $R\geq 2$, and $H_1^{\pm} = Z_\pm^0\,w$.

Note that
\begin{equation}
	Z_\pm^0 \, \nu_\pm(\set{\tau_s \in \Lambda \st D_R(\tau) \neq \emptyset}) = Z_\pm^0 - H_R^\pm(0),
\label{eq:Z-HR}
\end{equation}
and that the radius of convergence for $H_R^\pm$ is $\geq 1$.
Also, $(H_R^\pm(0))_{R \geq 1}$ is an increasing sequence. In fact, $H_1^\pm(0)=0$ and so $H_2^\pm(0) > 0$
by (\ref{eq:HR-system}). Since $\varphi$ is positive and increasing on $[0,\xi)$, it then follows by induction from (\ref{eq:HR-system}) that $(H_R^\pm(0))_{R \geq 1}$ is increasing. Hence, we conclude from (\ref{eq:HR-system}) and (\ref{eq:zpzm}) that
\begin{equation}
	H_R^\pm(0) \nearrow Z_\pm^0 \quad \text{for} \quad R \rightarrow \infty.
\end{equation}
Taking $R$ large enough and expanding $\varphi(H_R^\pm(0))$ around $Z_\pm^0$ we obtain, in matrix form,
\begin{equation}
	\Delta_R = g_0 \, \Phi'_0 \, \Delta_{R-1} - \frac{g_0}{2} \, \Phi''_0 \, \Delta_{R-1}^2 + O(\Delta_{R-1}^3)\,,
\label{eq:HR-exp}
\end{equation}
where
\begin{equation}
	\Delta_R^n = 
	\begin{pmatrix}
	(\Delta_R^+)^n\\
	(\Delta_R^-)^n
	\end{pmatrix}
	=
	\begin{pmatrix}
	(Z^0_+ - H^+_R(0))^n\\
	(Z^0_+ - H^+_R(0))^n
	\end{pmatrix},
\end{equation}
$\Phi'_0$ is given by (\ref{eq:phiprime-def}) and
\begin{equation}
	\Phi''_0 = 
		\begin{pmatrix}
			a \, \varphi''(Z_+^0) & a^{-1} \, \varphi''(Z_-^0)\\
			b \, \varphi''(Z_+^0) & b^{-1} \, \varphi''(Z_-^0)
		\end{pmatrix}.	
\end{equation}

Setting $L_R = c\, \Delta_R$, eq. (\ref{eq:HR-exp}) gives
\begin{equation}
	L_R = L_{R-1} - \frac{g_0}{2} \, c \, \Phi''_0 \, \Delta_{R-1}^2 + O(\Delta_{R-1}^3)\,.
\end{equation}
From this we deduce that there exists $R_0>0$ such that
\begin{equation}
	L_{R-1} - A_- L_{R-1}^2 \leq L_R \leq L_{R-1} - A_+ L_{R-1}^2, \qquad R \geq R_0,
\end{equation}
where $A_\pm = A_\pm(\beta,h)$ are constants. Hence, it follows that
\begin{equation}
	\frac{1}{L_{R-1}} + B_- \leq \frac{1}{L_{R-1}} \frac{1}{1-A_- L_{R-1}} \leq \frac{1}{L_R} \leq \frac{1}{L_{R-1}} \frac{1}{1-A_+ L_{R-1}} \leq \frac{1}{L_{R-1}} + B_+,
\end{equation}
for $R \geq R_0$, where $B_\pm >0$ are constants. This implies
\begin{equation}
	B_- R + C_- \leq \frac{1}{L_R} \leq B_+ R + C_+
\end{equation}
for suitable constants $C_\pm$. Evidently, this proves that
\begin{equation}
	\frac{D_-}{R} \leq Z_\pm^0 - H_R^\pm(0) \leq \frac{D_+}{R}, \quad R \geq 1,
\end{equation}
where $D_\pm >0$ are constants, which together with (\ref{eq:Z-HR}) proves the claim.
\end{proof}

We also note the following generalization of Lemma 4 in \cite{DJW2}.

\begin{lem}
\label{lem:int-full-branch}
Suppose $u \, : \, \Lambda \rightarrow \mathbb{C}$ is a bounded function depending only on $\tau_s \in \Lambda$ through the ball $B_R(\tau)$ and the spins in $B_R(\tau)$, except those on its boundary, for some $R \geq 1$. Moreover, define the function $E_R \, : \, \Lambda \rightarrow \mathbb{R}$ by
\begin{equation}
	E_R(\tau_s) = \sum_{v \in D_R(\tau)} \frac{\sqrt{K_{s_v}}}{Z_{s_v}^0},
\end{equation}
with the convention $E_R(\tau_s) = 0$ if $D_R(\tau) = \emptyset$. Then
\begin{equation}
	\int_\Lambda u(\tau_s) d\mu_\pm(\tau_s)=  \frac{Z_{\pm}^0}{\sqrt{K_\pm}} \int_\Lambda u(\tau_s) E_R(\tau_s) d\nu_\pm(\tau_s).
\label{eq:int-full-branch}
\end{equation}
\end{lem}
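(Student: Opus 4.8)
The plan is to prove \eqref{eq:int-full-branch} by reducing the integral over the infinite-branch measure $\mu_\pm$ to a sum over the possible values of the ball $B_R(\tau)$ together with its spin decoration, and then resumming using the explicit description of $\mu_\pm$ from Corollary \ref{cor:measures-description} (or equivalently from \eqref{eq:vol}). Since $u$ depends on $\tau_s$ only through $\hat\tau = B_R(\tau)$ and the spins $\hat s$ in the interior of $B_R(\tau)$, we may write
\begin{equation}
	\int_\Lambda u\, d\mu_\pm = \sum_{\hat\tau,\hat s} u(\hat\tau_{\hat s})\, \mu_\pm(A(\hat s)),
\end{equation}
where the sum runs over finite trees $\hat\tau$ of height $R$ and spin configurations $\hat s$ on $V(\hat\tau)$ with $\hat s(r) = \pm 1$, and $A(\hat s)$ is the cylinder set defined just after the proof of Theorem \ref{thm:wl}. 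Here I use that $u$ does not depend on the boundary spins only to the extent needed; in fact the sum as written already includes the boundary spins, but the point is that $u$ is constant on each such cylinder.

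First I would substitute the explicit formula
\begin{equation}
	\mu_\pm(A(\hat s)) = g_0^{|\hat\tau|}\, e^{-H(\hat\tau_{\hat s})} \sum_{i=1}^L \alpha^{(\hat s(v_i)\mp 1)/2}\, \varphi'(Z^0_{\hat s(v_i)}) \prod_{j\neq i}\varphi(Z^0_{\hat s(v_j)}),
\end{equation}
where $v_1,\dots,v_L$ are the vertices of $\hat\tau$ at maximal distance $R$ from the root. Using \eqref{eq:Kpm} in the form $\sqrt{K_{\pm 1}} = \alpha^{(\pm 1 \mp 1)/2}\sqrt{K_\mp}\cdot(\text{correction})$ — more precisely $\alpha^{(\sigma \mp 1)/2} = \sqrt{K_\sigma}/\sqrt{K_\pm}$ for $\sigma=\pm1$, which is exactly \eqref{eq:Kpm} — the weight factor $\alpha^{(\hat s(v_i)\mp 1)/2}$ becomes $\sqrt{K_{\hat s(v_i)}}/\sqrt{K_\pm}$. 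Then I would recognize the remaining structure: for each maximal vertex $v_i$ we have a factor $\varphi'(Z^0_{\hat s(v_i)})$ and for the others $\varphi(Z^0_{\hat s(v_j)})$, and this is precisely what one obtains by summing the $\nu_\pm$-branch weights \eqref{eq:prob-branch} over all ways of attaching finite branches at the maximal vertices, with the derivative $\varphi'$ at $v_i$ corresponding to singling out one branch at $v_i$ to carry the rest of the tree down to level $R$ (i.e. to have $D_R\neq\emptyset$ through $v_i$), weighted by $\sqrt{K_{s_v}}/Z^0_{s_v}$ via the asymptotics \eqref{eq:vol}.

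The cleanest route — and the one I expect to be the real content — is to observe that \eqref{eq:vol} itself, rewritten using \eqref{eq:Kpm}, says exactly that
\begin{equation}
	\mu_\pm(A(\hat s)) = \frac{Z^0_\pm}{\sqrt{K_\pm}}\, \nu_\pm\Bigl(\set{\tau_s \st B_R(\tau)=\hat\tau,\ s|_{V(\hat\tau)}=\hat s}\cdot E_R\Bigr),
\end{equation}
i.e. the $\mu_\pm$-mass of a cylinder equals $Z^0_\pm/\sqrt{K_\pm}$ times the $\nu_\pm$-integral of the indicator of that cylinder against $E_R$; here one checks that $\nu_\pm(\mathbf 1_{\{B_R=\hat\tau,\,s=\hat s\}}\,E_R) = Z^0_\pm{}^{-1} g_0^{|\hat\tau|} e^{-H(\hat\tau_{\hat s})}\sum_i (\sqrt{K_{\hat s(v_i)}}/Z^0_{\hat s(v_i)})\varphi'(Z^0_{\hat s(v_i)})\prod_{j\ne i}\varphi(Z^0_{\hat s(v_j)})$ by the same branch-resummation as in \cite{DJW2} Lemma 4, using \eqref{eq:prob-branch} and $\sum_{\tau_s:\,s(u)=\sigma}(Z^0_\sigma)^{-1}g_0^{|\tau|}e^{-H(\tau_s)}\prod p = 1$ and its derivative. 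Multiplying by $u(\hat\tau_{\hat s})$ and summing over $(\hat\tau,\hat s)$ then immediately yields \eqref{eq:int-full-branch}, since $u E_R$ depends on the branch only through $B_R$ and its non-boundary spins (the spins on $D_R$ enter $E_R$, but that is consistent with the claim since $E_R$ is allowed to depend on them — the hypothesis restricts $u$, not $E_R$). The main obstacle is purely bookkeeping: matching the combinatorial factor $\sum_{i=1}^L\varphi'(Z^0_{\hat s(v_i)})\prod_{j\ne i}\varphi(Z^0_{\hat s(v_j)})$ produced by differentiating a product of $\varphi$'s with the $\sum_{v\in D_R}$ in the definition of $E_R$, and carefully tracking the spin labels $\hat s(v_i)$ through the identity \eqref{eq:Kpm}; once the bijective correspondence between "singling out a maximal vertex $v_i$ of $\hat\tau$" and "a vertex of $D_R(\tau)$ lying above $v_i$" is set up, everything collapses, exactly as in the proof of Lemma 4 of \cite{DJW2}, to which I would refer for the details common to both arguments.
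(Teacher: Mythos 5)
Your overall strategy --- reduce both sides to the cylinder sets $A(\hat s)$ and compare explicit weights --- is essentially the paper's own (the paper organizes the same comparison as a sum over marked trees $\Lambda(R)$ rather than over cylinders), but your pivotal computation does not hold as stated. On the cylinder $A(\hat s)$ the function $E_R$ is the \emph{constant} $\sum_{i=1}^L \sqrt{K_{\hat s(v_i)}}/Z^0_{\hat s(v_i)}$, because the spins on $D_R(\tau)=\{v_1,\dots,v_L\}$ are fixed by $\hat s$; hence
\[
\int \mathbf{1}_{A(\hat s)}\, E_R\, d\nu_\pm \;=\; \Bigl(\sum_{i=1}^L \tfrac{\sqrt{K_{\hat s(v_i)}}}{Z^0_{\hat s(v_i)}}\Bigr)\,\nu_\pm(A(\hat s)),
\]
and $\nu_\pm(A(\hat s))$ is obtained by an \emph{unconstrained} resummation $\sum_k p_k (Z^0_{\hat s(v_j)})^k=\varphi(Z^0_{\hat s(v_j)})$ of the subtrees below \emph{every} boundary vertex $v_j$, including $v_i$. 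No derivative $\varphi'$ can appear there: your heuristic of ``singling out one branch at $v_i$ to carry the tree down to level $R$'' is vacuous, since $v_i$ already lies at level $R$ and nothing below it is constrained by the cylinder. So the right-hand side carries $\varphi(Z^0_{\hat s(v_i)})/Z^0_{\hat s(v_i)}$ at the distinguished vertex, whereas $\mu_\pm(A(\hat s))$ carries $\varphi'(Z^0_{\hat s(v_i)})$ there (this $\varphi'$ genuinely comes from the spine exiting through $v_i$, i.e.\ from the $k+1$ left/right placements of the $k$ further subtrees of $u_R$).

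Matching the two termwise therefore requires $Z^0_\sigma\,\varphi'(Z^0_\sigma)=\varphi(Z^0_\sigma)$ for $\sigma=\pm 1$, and this is precisely the nontrivial content you would need to supply: it is not among the relations \eqref{eq:Zp_func_eq}--\eqref{eq:g0_func_eq}, \eqref{eq:alpha} that you invoke (it is the single-type criticality relation, automatic when $Z^0_+=Z^0_-$, e.g.\ at $h=0$, but an extra pair of identities otherwise). Concretely, testing your claimed identity with $u\equiv 1$, $R=1$ reduces it to $g_0 a\varphi(Z^0_+)/Z^0_+ + g_0 a^{-1}\varphi(Z^0_-)/(\alpha Z^0_-)=1$, which, given $Z^0_+=g_0 a\varphi(Z^0_+)+g_0 a^{-1}\varphi(Z^0_-)$, amounts to $\alpha=Z^0_+/Z^0_-$ --- again not one of the established identities, which only give that $(\alpha,1)^T$ is the eigenvalue-$1$ right eigenvector of $g_0\Phi'_0$. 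Until this mismatch between $\varphi'(Z^0_{\hat s(v_i)})$ and $\varphi(Z^0_{\hat s(v_i)})/Z^0_{\hat s(v_i)}$ is reconciled (or the comparison reorganized so that only combinations controlled by the fixed-point and eigenvector relations occur, which is what the paper's marked-tree resummation over $\Lambda(R)$ is designed to do), the argument is not complete.
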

\begin{proof}
Using (\ref{eq:prob-spine}-\ref{eq:prob-branch}) we may evaluate the LHS of (\ref{eq:int-full-branch}) and get
\begin{equation}
	\sum_{\tau_s \in \Lambda(R)} u(\tau_s) \, g_0^{|\tau|} \,
        e^{-H (\tau_s)}  \, \alpha^{(s(v_R)-s_0)/2}\prod_{v \in
          V(\tau)\setminus r} p_{\sigma_v-1}\,, 
\end{equation}
where $\Lambda(R)$ denotes the set of finite rooted trees in $\Lambda$
with one marked vertex $w_R$ of degree 1 at distance $R$ from the root,
and $v_R$ is the neighbor of $w_R$.

On the other hand, the integral on the RHS can be written as
\begin{equation}
	\frac{1}{Z_\pm^0} \sum_{\tau_s \in \Lambda(R)} u(\tau_s) \,
        g_0^{|\tau|} e^{-H(\tau_s)}
        \frac{\sqrt{K_{s(v_R)}}}{Z_{s(v_R)}^0} Z_{s(v_R)}^0 \prod_{v
          \in V(\tau)\setminus r} p_{\sigma_v-1}. 
\end{equation}
By comparing the two expressions the identity
(\ref{eq:int-full-branch}) follows.
\end{proof}

As a consequence of this result we have the following lemma.

\begin{lem}
\label{lem:BRinverse-bound}
There exist constants $c_\pm > 0$ such that
\begin{equation}
	\left< |B_R|^{-1} \right>_{\mu_\pm} \leq c_\pm R^{-2}
\end{equation}
\end{lem}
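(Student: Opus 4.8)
The plan is to use Lemma \ref{lem:int-full-branch} to reduce the estimate of $\langle |B_R|^{-1}\rangle_{\mu_\pm}$ to an estimate with respect to the branch measures $\nu_\pm$, and then to exploit the survival-probability bound of Proposition \ref{prop:Kolm-gen} together with the volume bound \eqref{volball}. First I would note that for a tree $\tau_s$ in the support of $\mu_\pm$, the spine structure forces $|B_R(\tau)|$ to be at least the number of vertices of the first $R$ branches lying within radius $R$; more usefully, $|B_R(\tau)| \geq 1 + \sum_{i=1}^{R} b_i^{(R-i)}$, where $b_i^{(m)}$ is the number of vertices of the $i$-th branch at distance $\leq m$ from its root. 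Since $|B_R|^{-1}$ is a bounded function ($\leq 1$), and it is "almost" of the product form needed for Lemma \ref{lem:int-full-branch}, the cleanest route is: apply Lemma \ref{lem:int-full-branch} with a suitable $u$ that is a lower-bounding surrogate for $|B_R|^{-1}$ depending only on $B_{R'}(\tau)$ for some $R' \le R$, turning the $\mu_\pm$-integral into $\tfrac{Z_\pm^0}{\sqrt{K_\pm}}\int u\, E_{R'}\, d\nu_\pm$.

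The key mechanism is that under $\nu_\pm$, conditioned on $D_{R}(\tau) \ne \emptyset$, the branch already contributes order $R$ vertices on average to $|B_R|$, but the relevant point is simpler: split $\Lambda$ according to whether $D_{\lceil R/2\rceil}(\tau)=\emptyset$ or not. On the event $D_{\lceil R/2\rceil}(\tau)=\emptyset$ the branch is entirely contained in $B_{\lceil R/2 \rceil}$, so $|B_R(\tau)| = |\tau|+1$ is small only with small probability — but this is not quite what controls $|B_R|^{-1}$. Instead, the honest approach: for the full tree $\tau_s \sim \mu_\pm$, use that the spine contributes $R+1$ vertices and each of the first $R$ branches contributes at least $1$ vertex, so $|B_R(\tau)| \geq R+1$ deterministically on the support of $\mu_\pm$; but $R+1$ only gives $R^{-1}$, not $R^{-2}$. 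To gain the extra power, observe that $\langle |B_R(\tau)|\rangle_{\mu_\pm}$ is of order $R^2$ by \eqref{volball}, and that the fluctuations are controlled: write $\langle |B_R|^{-1}\rangle \le \tfrac{4}{R^2}\,\mu_\pm(|B_R| \ge R^2/2) + \mu_\pm(|B_R| < R^2/2)$, so it suffices to show $\mu_\pm(|B_R(\tau)| < R^2/2) = O(R^{-2})$.

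Thus the heart of the argument is a lower-tail bound: $\mu_\pm(|B_R(\tau)| < \tfrac12 R^2) \le c R^{-2}$. Here I would use the branch decomposition from Corollary \ref{cor:measures-description}: the $i$-th spine vertex carries $k_i$ branches with $k_i \geq 0$, and $|B_R(\tau)| \geq \sum_{i=1}^{\lfloor R/2\rfloor}\sum_{j=1}^{k_i} |B_{\lfloor R/2\rfloor}(\tau_{ij})|$, a sum of (conditionally) independent branch volumes. By Proposition \ref{prop:Kolm-gen}, a branch independently reaches depth $\lfloor R/2 \rfloor$ with probability $\geq k_-/R$, contributing at least $\lfloor R/2\rfloor$ vertices when it does; so the probability that fewer than, say, $R/(4k_-)$ of the $\sim R/2$ spine slots produce a deep branch decays faster than any power of $R$ by a large-deviations / Chernoff estimate for the relevant Bernoulli-type sum, and on the complementary event $|B_R(\tau)| \gtrsim R \cdot R = R^2$. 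I expect the main obstacle to be bookkeeping the conditional independence of branches under $\mu_\pm$ correctly (the joint law \eqref{eq:prob-spine} couples the $k_i$ to the spine spins) and making the surrogate function $u$ genuinely satisfy the hypotheses of Lemma \ref{lem:int-full-branch} — i.e., depending on $B_{R'}(\tau)$ only through interior spins — so that the reduction to $\nu_\pm$ is legitimate; once that is set up, the tail bound itself is a routine Chernoff estimate using the uniform lower bound $k_-/R$ on branch survival.
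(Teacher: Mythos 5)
Your reduction to a lower-tail estimate is where the argument breaks down. You reduce the lemma to showing $\mu_\pm(\{|B_R(\tau)| < R^2/2\}) = O(R^{-2})$ and propose to obtain this from a Chernoff bound asserting that, up to super-polynomially small probability, at least $R/(4k_-)$ of the first $\sim R/2$ spine vertices carry a branch of height $\geq R/2$. But Proposition \ref{prop:Kolm-gen} is two-sided: a branch reaches height $R/2$ with probability of order $1/R$ (at most $2k_+/R$), so the expected number of such deep branches among $O(R)$ spine vertices is $O(1)$, not $\Omega(R)$. A Chernoff/Poisson estimate therefore concentrates this count near a \emph{constant}; it is the event of having many deep branches that is rare, and with probability bounded away from zero there are no branches of height $\geq R/2$ at all. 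The scale of your claimed typical event is off by a factor of $R$. Moreover, the lower-tail bound you reduce to is itself false at the rate you need: $|B_R|/R^2$ converges under $\mu_\pm$ to a nondegenerate limit law which charges every neighbourhood of the origin (the generic picture for critical trees with a spine, where the ball volume is a sum of $O(R)$ heavy-tailed branch volumes), so $\mu_\pm(\{|B_R| < \delta R^2\})$ tends to a positive constant for each fixed $\delta>0$ rather than to $0$. The bound $\langle |B_R|^{-1}\rangle_{\mu_\pm} \leq c_\pm R^{-2}$ holds not because $|B_R| \gtrsim R^2$ with overwhelming probability, but because the small-ball probabilities of $|B_R|/R^2$ decay fast enough near $0$; a crude split into $\{|B_R|\geq R^2/2\}$ and its complement cannot capture this.

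The paper gains the two powers of $R$ by an entirely different, and softer, mechanism, using exactly the two ingredients you identified but in a different combination. First, Lemma \ref{lem:int-full-branch} is applied with $u = |D_R|^{-1}$ on $\{D_R \neq \emptyset\}$ (and $u=0$ otherwise); this $u$ legitimately satisfies the hypotheses since it depends only on the graph $B_R(\tau)$ and not on any spins. Because $E_R(\tau_s) \leq C\,|D_R(\tau)|$, the factor $|D_R|^{-1}$ is absorbed by $E_R$, and what remains is bounded by a constant times $\nu_\pm(\{D_R \neq \emptyset\}) \leq k_+/R$ from Proposition \ref{prop:Kolm-gen}; hence $\langle |D_R|^{-1}\rangle_{\mu_\pm} \leq c/R$ for every single generation $R$. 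Second, writing $|B_R| = |D_1| + \cdots + |D_R|$, the AM--GM inequality gives $|B_R|^{-1} \leq R^{-1}\bigl(\prod_{i=1}^R |D_i|\bigr)^{-1/R}$, and the generalized H\"older inequality gives $\bigl\langle \prod_{i=1}^R |D_i|^{-1/R}\bigr\rangle_{\mu_\pm} \leq \prod_{i=1}^R \langle |D_i|^{-1}\rangle_{\mu_\pm}^{1/R} \leq c\,(R!)^{-1/R} \leq c'/R$. Multiplying the two factors of $R^{-1}$ yields the claim, with no need for any independence of the branches or any tail estimate on $|B_R|$ beyond the one-generation bound.
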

\begin{proof}
Define, for fixed $R \geq 1$, the function
\begin{equation}
	u(\tau) = 
	\begin{cases}
	|D_R(\tau)|^{-1} \quad & \text{if} \; D_R(\tau) \neq \emptyset\\
	0 & \text{otherwise}.
	\end{cases}
\end{equation}
Then $u(\tau)$ fulfills the assumptions of Lemma \ref{lem:int-full-branch} for this 
value of $R$. Hence
\begin{equation}
\begin{split}
	\left< |D_R(\tau)|^{-1} \right>_{\mu_\pm} & = \frac{Z_\pm^0}{\sqrt{K_\pm}} \sum_{\tau_s:D_R(\tau) \neq \emptyset}  |D_R(\tau)|^{-1} \, E(\tau_s) \, e^{-H(\tau_s)} \prod_{v \in V(\tau) \setminus r} p_{\sigma_v-1}\\
	& \leq c'_\pm \sum_{\tau_s:D_R(\tau) \neq \emptyset} e^{-H(\tau_s)} \prod_{v \in V(\tau) \setminus r} p_{\sigma_v-1} \leq \frac{c''_\pm}{R},
\end{split}
\label{eq:DR_inv_bound}
\end{equation}
where Proposition \ref{prop:Kolm-gen} is used in the last step. Combining
this fact with Jensen's inequality, we obtain
\begin{equation}
\begin{split}
	\left< |B_R|^{-1} \right>_{\mu_\pm} & = \left< \frac{1}{|D_1| + \cdots + |D_R|} \right>_{\mu_\pm} \\
	& \leq R^{-1} \left< (|D_1| |D_2| \cdots |D_R|)^{-1/R}\right>_{\mu_\pm} \\
	& \leq R^{-1} \prod_{i=1}^R \left< |D_i|^{-1} \right>_{\mu_\pm}^{1/R} \\
	& \leq c''_\pm R^{-1} (R!)^{-1/R} \leq c_\pm R^{-2}.
\end{split}
\end{equation}
\end{proof}

\bigskip

Returning to the spectral dimension, let us define, 
with the notation of subsection \ref{sec:spec-dim}, the
generating function for return probabilities of the simple random walk on a tree $\tau$ by
\begin{equation}
	Q_\tau(x) = \sum_{t=0}^\infty (1-x)^{\frac{t}{2}} \pi_t(\tau,r)\,,
\end{equation}
and set
\begin{equation}
	Q(x) = \left< Q_\tau(x) \right>_{\bar{\mu}}\,.
\end{equation}
The annealed spectral dimension as defined by (\ref{spectral2}) is
related to the singular behavior of the function $Q(x)$ as
follows. First, note that if $\bar d_s$ exists, we have
\begin{equation}
	\left< \pi_t(\tau,r) \right>_{\bar{\mu}} \sim t^{- \frac{\bar{d}_s}{2}}, \quad t \to \infty\,.
\label{eq:ds-larget}
\end{equation} 
For $\bar d_s<2$, this implies that $Q(x)$ diverges as
\begin{equation}
	Q(x) \sim x^{-\gamma}, \quad \text{as} \quad x \to 0,
\label{eq:Q-sim}
\end{equation}
where
\begin{equation}\label{dsa}
\gamma = 1-\frac{\bar d_s}{2}\,.
\end{equation}
We shall take \eqref{eq:Q-sim} and \eqref{dsa} as the definition of $\bar
d_s$ and prove \eqref{eq:Q-sim} with $\gamma =\frac 13$ by establishing
the estimates
\begin{equation}\label{Qest}
	\underline{c} \, x^{-1/3} \leq Q(x) \leq \bar{c} \, x^{-1/3}
\end{equation}
for $x$ sufficiently small, where $\underline{c}$ and
$\bar{c}$ are positive constants, that may depend on $\beta, h$.
\begin{thm}
\label{thm:spect-dim}
Under the assumptions of Theorem \ref{thm:wl}, the annealed spectral 
dimension of $(\mathcal{T},\bar{\mu})$ is
\begin{equation}
	\bar{d}_s = \frac{4}{3}\,.
\end{equation} 
\end{thm}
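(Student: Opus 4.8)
The plan is to establish the two-sided bound \eqref{Qest} on $Q(x)=\langle Q_\tau(x)\rangle_{\bar\mu}$, since by \eqref{eq:Q-sim}--\eqref{dsa} this immediately yields $\bar d_s=4/3$. The strategy follows \cite{DJW2} closely, exploiting the decomposition of a $\bar\mu$-typical tree into an infinite spine carrying i.i.d.-like finite branches, so that a simple random walk started at the root spends ``most'' of its time diffusing along the spine while making excursions into the branches. The key generating-function identity is the standard first-passage/resolvent decomposition: letting $P_\tau(x)$ denote the generating function $\sum_t(1-x)^{t/2}p_t(\tau,r)$ for the \emph{first return} probabilities, one has $Q_\tau=(1-P_\tau)^{-1}$, and for a tree built from subtrees hung at the root one can express $P_\tau$ recursively in terms of the analogous first-return generating functions of the branches. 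The effect of the Ising spins enters only through the branch distributions $\nu_\pm$ and the spine spin distribution \eqref{eq:prob-spine}; crucially, by Remark \ref{rem:zero_field} and the structure of Corollary \ref{cor:measures-description}, the relevant geometric quantities (branch volumes, heights, resistances) are governed by the same generating functions $\varphi$, $Z_\pm^0$, $g_0$ as in the spin-free case, so the $\bar\mu$-averaged walk behaves, up to $\beta,h$-dependent constants, like the walk on a generic random tree.

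Concretely, I would proceed as follows. First, derive the resolvent recursion relating $Q_\tau(x)$ to the first-return generating functions of the branches attached to the first spine vertex, and iterate it to obtain an expression for $Q(x)$ as a sum over the spine length of a product of branch contributions. Second, prove the \emph{upper bound}: bound the return-probability generating function of a single finite branch $T$ by a quantity controlled by $|B_{R}(T)|$ and use the moment estimates already available --- the linear growth $\langle|B_R|\rangle_{\nu_\pm}\asymp R$ from \eqref{eq:ballR-bound} and, more importantly, the inverse-ball bound $\langle|B_R|^{-1}\rangle_{\mu_\pm}\le c_\pm R^{-2}$ from Lemma \ref{lem:BRinverse-bound} --- to show that the walk escapes to spine-distance of order $x^{-1/3}$, giving $Q(x)\le \bar c\,x^{-1/3}$. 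Third, prove the matching \emph{lower bound}: restrict attention to the event (of not-too-small $\bar\mu$-probability, controlled via Proposition \ref{prop:Kolm-gen}) that the branches in the first $\sim x^{-1/3}$ spine segments are small, on which the walk's first-return generating function along the truncated spine is explicitly comparable to that of a one-dimensional walk of length $\sim x^{-1/3}$, yielding $Q(x)\ge \underline c\,x^{-1/3}$.

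The main obstacle, as in \cite{DJW2}, will be the lower bound, and within it the control of the contribution of large branches: one must show that exceptionally heavy branches hanging off the spine do not trap the walk long enough to destroy the $x^{-1/3}$ behaviour, which requires a quantitative tail estimate on branch volumes under $\nu_\pm$ (of the type $\nu_\pm(|B_R(T)|>\lambda R^2)$ small) together with a Borel--Cantelli / union-bound argument along the first $x^{-1/3}$ spine vertices. Establishing such volume tails is where the square-root singularity \eqref{eq:Zpm_squareroot} of $Z_\pm(g)$ and the asymptotics \eqref{eq:znpm} of $Z_{N\pm}$ enter in an essential way, exactly as the $N^{-3/2}$ law does for generic trees; the novel bookkeeping compared to \cite{DJW2} is merely that every branch carries a spin label $s_i=\pm1$ and one works with the pair $(\nu_+,\nu_-)$ and the transfer matrix $g_0\Phi'_0$ of Remark \ref{rem:c-alpha}, whose Perron--Frobenius structure ensures all the $\pm$-dependent constants are strictly positive and finite. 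Once these branch estimates are in place, combining them with the spine recursion and the elementary asymptotics of the one-dimensional return generating function gives \eqref{Qest} and hence $\bar d_s=4/3$.
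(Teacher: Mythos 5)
Your proposal follows essentially the same route as the paper: both reduce the theorem to the two-sided bound \eqref{Qest}, obtain the upper bound from the argument of \cite{DJW2} driven by Lemma \ref{lem:BRinverse-bound}, and obtain the lower bound by conditioning on the event (controlled via Proposition \ref{prop:Kolm-gen}) that no branch attached to the first $\sim x^{-1/3}$ spine vertices is tall, with the Perron--Frobenius structure of $g_0\Phi'_0$ supplying the positive $\pm$-dependent constants. The only notable difference is that the paper's lower bound does not need the volume-tail and union-bound machinery you flag as the main obstacle: it applies Jensen's inequality to $y\mapsto y^{-1}$ and then only needs the first-moment estimate $\langle\sideset{}{^R}\sum_{T\subset\tau}|T|\rangle_R\le CR^2$, which follows directly from \eqref{eq:ballR-bound}.
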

\begin{proof}
\noindent We first prove the lower bound in \eqref{Qest}. 

Let $R \geq 1$ be fixed and consider the spine vertices $u_0, u_1,
\dots, u_R$ with given spin values $s_0, \ldots, s_R$ and branching
numbers $k'_1,\ldots,k'_R$, $k''_1,\ldots,k''_R \geq 0$ as in
Corollary \ref{cor:measures-description}. The
conditional probability that a given branch at $u_j$ has length
$\geq R$ is bounded by $\frac{c}{R}$ by Proposition \ref{prop:Kolm-gen}. Hence, the conditional
probability that at least one of the $k'_j + k''_j$ branches at $u_j$ has height
$\geq R$ is bounded by $(k'_j + k''_j) \frac{c}{R}$. Using Corollary \ref{cor:measures-description} and 
summing over $k'_1,\ldots,k'_R$, $k''_1,\ldots,k''_R$, we get that the conditional
probability $q_R$ that at least one branch at $u_j$ is of height $\geq R$, given
$s_0,\ldots,s_R$, is bounded by
\begin{equation}
	\frac{1}{1 + \alpha} \, g_0^R \, e^{-H_R} \prod_{\substack{i=1\\i \neq j}}^R \varphi'(Z_{s_i}^0) \, \varphi''(Z_{s_j}^0) \alpha^{(s_R+1)/2} \, \frac{c}{R} \leq \frac{c'}{R}.
\end{equation}

Using that the distributions of the branches at different spine vertices are
independent for given $s_0,\ldots,s_R$, it follows that the conditional
probability that no branch at $u_1,\ldots,u_R$ has length $\geq R$, for given
$s_0,\ldots,s_R$, is bounded from below by
\begin{equation}
	(1-q_R)^R \geq \left( 1 - \frac{c'}{R} \right)^R \geq e^{-c'+O(R^{-1})}.
\end{equation}
Denoting this conditioned event by $\mathcal{A}_R$, it follows from Lemmas 6 and 7
in \cite{DJW2} that the conditional expectation of $Q_\tau(x)$,
given $s_0,s_1,\ldots,s_R$, is
\begin{equation}
\begin{split}
	\geq e^{c'+O(R^{-1})} \left< \left( \frac{1}{R} + R x + \sideset{}{^R}\sum_{T\subset\tau} x \, |T| \right)^{-1} \right>_R\\
	\geq e^{c'+O(R^{-1})}  \left( \frac{1}{R} + R x + x \left< \sideset{}{^R}\sum_{T\subset\tau} |T|\right>_R \right)^{-1}.
\end{split}
\end{equation}
Here $\left< \cdot \right>_R$ denotes the conditional expectation value
w.r.t. $\mu$ on $\mathcal{A}_R$ and
$\sideset{}{^R}\sum_{T\subset\tau}$ the sum over all branches $T$ of
$\tau$ attached to vertices on the spine at distance $\leq R$ from the
root. We have 
\begin{equation}
\begin{split}
	\left< \sideset{}{^R}\sum_{T\subset\tau} |T|\right>_R & = \sum_{i=1}^R \left< |B_R^i(\tau)| \right>_R \\
	&\leq \sum_{i=1}^R \mu(\mathcal{A}_R \st s_0,\ldots,s_R)^{-1} \left< |B_R^i| \right>_\mu \\
	&\leq e^{c'+O(R^{-1})} \sum_{i=1}^R \left< |B_R| \right>_{\nu_{s_i}}	\leq C \, R^2,
\end{split}
\end{equation}
where (\ref{eq:ballR-bound}) is used in the last step.

This bound being independent of $s_0,\ldots,s_R$ we have proven that
\begin{equation}
	Q(x) \geq cst. \, \left( \frac{1}{R} + R x + C R^2 x \right)^{-1}
\end{equation}
and consequently, choosing $R \sim x^{-\frac{1}{3}}$, it follows that
\begin{equation}
	Q(x) \geq \underline{c} \, x^{- \frac{1}{3}}.
\end{equation}

As concerns the upper bound in \eqref{Qest}, it follows by an argument
identical to the one in \cite{DJW2} on p.1245--50 by using Lemma
\ref{lem:BRinverse-bound}. 
\end{proof}

\section{Absence of spontaneous magnetization}
\label{sec:mag-prop}

Using the characterization of the measure $\mu^{(\beta,h)}$
established in Section \ref{sec:inf-measure} and that $\bar{d}_h = 2$,
we  are now in a position to discuss the magnetization properties of
generic Ising trees in some detail. In view of the fact that the trees have
a single spine, we distinguish between the magnetization on the spine
and the bulk magnetization. In subsection \ref{sec:spont-mag} we show
that the former can be expressed in terms of an effective Ising model
on the half-line $\set{0,1,2,\ldots}$. The bulk magnetization is
discussed in subsection \ref{sec:mean-mag} 

\subsection{Magnetization on the spine}
\label{sec:spont-mag}

The following result is crucial for the subsequent discussion.

\begin{prop} \label{prop:zpm-smooth}
Under the assumptions of Theorem \ref{thm:wl}, the functions $Z_\pm^0$  
are smooth functions of $\beta$, $h$.
\end{prop}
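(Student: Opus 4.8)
The plan is to regard the triple $(Z_+^0,Z_-^0,g_0)$ as the solution of the system of three equations \eqref{eq:Zp_func_eq}--\eqref{eq:g0_func_eq} in the three unknowns $(Z_+,Z_-,g)$, now with $\beta$ and $h$ treated as parameters, and to apply the (real-analytic or $C^\infty$) implicit function theorem. Concretely, I would define
\begin{equation}
	G(Z_+,Z_-,g;\beta,h) = \begin{pmatrix} Z_+ - g(a\,\varphi(Z_+) + a^{-1}\varphi(Z_-)) \\ Z_- - g(b\,\varphi(Z_+) + b^{-1}\varphi(Z_-)) \\ \det\!\left(\mathbb{1} - g\,\Phi'(Z_+,Z_-)\right) \end{pmatrix},
\end{equation}
with $a = e^{\beta+h}$, $b = e^{-\beta+h}$, which is a smooth (indeed real-analytic, since $\varphi$ is analytic on $\{|z|<\xi\}$ and $Z_\pm^0 < \xi$ by genericity) function of all five variables near the point of interest. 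By the Proposition already established, $G(Z_+^0,Z_-^0,g_0;\beta,h) = 0$. The claim then follows once I show that the $3\times 3$ Jacobian $\partial G/\partial(Z_+,Z_-,g)$ is invertible at this point.

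The main obstacle is precisely this non-degeneracy of the Jacobian, and in particular its lower-right structure is delicate because the third equation is itself the vanishing of a determinant. I would compute the Jacobian in block form: the upper-left $2\times 2$ block is $\mathbb{1} - g_0\Phi'_0$, which is \emph{singular} by \eqref{eq:J-zero}, so invertibility of the full $3\times 3$ matrix cannot come from that block alone and must genuinely use the third row and the $g$-column. The $g$-column of the first two rows is $-\Phi(Z_+^0,Z_-^0) = -g_0^{-1}\mathcal{Z}^0$, which is a strictly positive vector. The key point is that $\mathbb{1} - g_0\Phi'_0$ has a one-dimensional kernel (eigenvalue $1$ is simple for $g_0\Phi'_0$ by Remark \ref{rem:c-alpha}, since the other eigenvalue is $\lambda$ with $|\lambda|<1$), spanned by a vector $v$ with components proportional to $(\alpha,1)$ up to the appropriate normalization; dually the cokernel is spanned by the Perron eigenvector $c = (c_1,c_2)$ of Remark \ref{rem:c-alpha}, which has strictly positive entries. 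Invertibility of the bordered matrix then reduces, by a standard bordered-matrix (Schur-complement-type) criterion, to the two conditions: (i) the $g$-column $-g_0^{-1}\mathcal{Z}^0$ of the top two rows has nonzero pairing with $c$, i.e. $c_1 Z_+^0 + c_2 Z_-^0 \neq 0$, which is clear since all four quantities are positive; and (ii) the gradient of the determinant $\partial_{(Z_+,Z_-)}\det(\mathbb{1}-g\Phi')$ has nonzero pairing with the kernel vector $v$. Condition (ii) is exactly the statement that $g=g_0$ is a simple (square-root) branch point rather than a higher-order one, which is encoded in the fact that $K_-$ in \eqref{eq:Km} is finite and nonzero — the denominator $\alpha^3 a\,\varphi''(Z_+^0) + b^{-1}\varphi''(Z_-^0)$ there is strictly positive because $\varphi'' > 0$ (as $p_n > 0$ for some $n\ge 2$) and $\alpha > 0$. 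So condition (ii) holds too.

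Having verified both conditions, the bordered Jacobian is invertible, and the real-analytic implicit function theorem yields that $(Z_+^0,Z_-^0,g_0)$ depend smoothly (in fact analytically) on $(\beta,h)$ in a neighborhood of any admissible point; in particular $Z_\pm^0$ are smooth functions of $\beta,h$, as claimed. An alternative, perhaps cleaner, route that avoids differentiating the determinant: observe that $g_0 = g_0(\beta,h)$ is the radius of convergence of $Z_+(g;\beta,h)$, and one can instead characterize $g_0$ via the condition obtained from \eqref{eq:alpha}, namely the consistency relation
\begin{equation}
	\bigl(1 - g_0\,a\,\varphi'(Z_+^0)\bigr)\bigl(1 - g_0\,b^{-1}\varphi'(Z_-^0)\bigr) = g_0^2\,\varphi'(Z_+^0)\varphi'(Z_-^0),
\end{equation}
which is just \eqref{eq:J-zero} written out; this is the same system, so the same Jacobian computation is needed. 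I expect the bordered-matrix non-degeneracy — equivalently, ruling out a coalescence of branch points — to be the only substantive step, with everything else being a direct appeal to the implicit function theorem and positivity of the Perron–Frobenius data from Remark \ref{rem:c-alpha}.
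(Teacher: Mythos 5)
Your proposal follows essentially the same route as the paper: both apply the implicit function theorem to the augmented three-equation system $(F=0,\ \det(\mathbb{1}-g\Phi')=0)$ with $(\beta,h)$ as parameters and reduce everything to nonsingularity of the $3\times 3$ Jacobian at $(Z_+^0,Z_-^0,g_0)$. The paper verifies this by computing the entries $A_\pm$, $B$ explicitly and factoring $\det J$ into precisely the two pairings you isolate --- $c$ against the $g$-column (giving $c_1Z_+^0+c_2Z_-^0>0$) and $(A_+,A_-)$ against the kernel vector proportional to $(\alpha,1)$ --- concluding $\det J<0$ from $A_\pm<0$ and $\alpha>0$, so your bordered-matrix reduction is a slightly more structured organization of the same computation.
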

\begin{proof}
In Section \ref{sec:part-func} we have shown that $Z_\pm(\beta,h,g)$ is
a solution to the equation
\begin{equation}
	F(Z_+,Z_-,g) = 0
\end{equation}
where $F$ is defined in (\ref{eq:f}), and that
\begin{equation}
	Z_\pm^0(\beta,h) = Z_\pm(g_0(\beta,h),\beta,h)
\end{equation}
is a solution to
\begin{equation}
	\begin{cases}
		F(Z_+^0,Z_-^0,g_0) = 0\\
		\det(\mathbb{1} - g_0 \Phi'(Z_+^0,Z_-^0)) = 0\,,
	\end{cases}
\label{eq:zpzmg-zero}
\end{equation}
considered as three equations determining $(Z_+^0,Z_-^0,g_0)$ implicitly
as functions of $(\beta,h)$. Hence, defining $G: \; (-R,R)^2 \times \mathbb{R}^3 \rightarrow \mathbb{R}^3$ by
\begin{equation}
	G(Z_+^0,Z_-^0,g_0,\beta,h)=
		\begin{pmatrix}
	 		F(Z_+^0,Z_-^0,g_0) \\
    	\det(\mathbb{1} - g_0 \Phi'(Z_+^0,Z_-^0))   		
		\end{pmatrix}
\end{equation}
it suffices to show that its Jacobian $J$ w.r.t. $(Z_+^0,Z_-^0,g_0)$
is regular at $(Z_+^0(\beta,h),Z_-^0(\beta,h),g_0(\beta,h))$. We have
\begin{equation}
	J = 
	\begin{pmatrix}
		\mathbb{1} - g_0 \Phi'(Z_+^0,Z_-^0) & -\Phi(Z_+^0,Z_-^0)\\
		A_+ \qquad A_- & B		
	\end{pmatrix}\,,
\end{equation}
where
\begin{equation}
	A_\pm = \partd[]{Z_\pm^0}\det(\mathbb{1} - g_0 \Phi'(Z_+^0,Z_-^0))\,,  \qquad B = \partd[]{g_0}\det(\mathbb{1} - g_0 \Phi'(Z_+^0,Z_-^0)) 
\end{equation}
are readily calculated and equal
\begin{gather}
	A_+ =  - g_0 \, a \, \varphi''(Z_+^0) \, (1 - g_0 \, b^{-1} \varphi'(Z_-^0)) - g_0^2 \, a^{-1}\, b \, \varphi''(Z_+^0) \varphi'(Z_-^0)\,, \\
	A_- = - g_0 \, b^{-1} \, \varphi''(Z_-^0) \, (1 - g_0 \, a \varphi'(Z_+^0)) - g_0^2 \, a^{-1} \, b \,  \varphi'(Z_+^0) \varphi''(Z_-^0)\,, \\
	B = - a \varphi'(Z_+^0) -  b^{-1} \varphi'(Z_-^0) + 2 g_0 \, (ab^{-1} - a^{-1}b) \, \varphi'(Z_+^0) \varphi'(Z_-^0)\,.
\end{gather}
Using eqs. (\ref{eq:zpzmg-zero}) and (\ref{eq:alpha}), we get
\begin{equation}
	\det J = (Z_+^0 \, b \, \varphi'(Z_+^0) + g_0^{-1} \, Z_-^0 \, (1 - g_0 a \varphi'(Z_+^0)))
			\begin{vmatrix}
				1 & - \alpha\\
				A_+ & A_-
			\end{vmatrix}
		< 0 \,,
\end{equation}
since clearly $A_\pm<0$ and $\alpha>0$ by Remark
\ref{rem:c-alpha}. This proves the claim.
\end{proof}

We can now establish the following result for the single site magnetization on the spine.

\begin{thm}\label{thm:nomag-spine}
Under the assumptions of Theorem \ref{thm:wl}, the probability\linebreak $\mu^{(\beta,h)}(\set{s_v = +1 })$ is a smooth
function of $\beta, h$  for any spine vertex $v$. In particular, there is 
no spontaneous magnetization in the sense that 
\begin{equation}
	\lim_{h \to 0} \mu^{(\beta,h)}(\set{ s_v = + 1 }) = \frac{1}{2}\,.
\label{eq:mag-spine}
\end{equation}
\end{thm}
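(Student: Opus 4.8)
The plan is to compute the single-site magnetization on the spine explicitly in terms of the quantities $Z_\pm^0$, $\alpha$ and the transfer-matrix data from Corollary \ref{cor:measures-description}, and then invoke Proposition \ref{prop:zpm-smooth} together with a symmetry argument at $h=0$. First I would use part i) of Corollary \ref{cor:measures-description}: summing the spine probabilities $\rho^{s_0}_{k'_1,\dots,k'_N,k''_1,\dots,k''_N}(s_0,\dots,s_N)$ over all branching numbers $k'_i,k''_i\ge 0$ replaces each factor $(Z_{s_i}^0)^{k'_i+k''_i}p_{k'_i+k''_i+1}$ by $\varphi'(Z_{s_i}^0)$ (with a combinatorial count $k'_i+k''_i+1$ for the placements left/right, giving the derivative), so the marginal distribution of the spin sequence $(s_0,\dots,s_N)$ along the spine is, up to normalization,
\begin{equation}
	\mathrm{const}\cdot g_0^N e^{-H_N}\Big(\prod_{i=1}^N \varphi'(Z_{s_i}^0)\Big)\,\frac{\alpha^{(s_N-s_0)/2}}{1+\alpha^{(1-s_0)/2}\cdots},
\end{equation}
which is precisely a one-dimensional Ising chain on $\{0,1,\dots,N\}$ with nearest-neighbor coupling $\beta$ and a site-dependent effective field combining $h$ with $\tfrac12\ln(\varphi'(Z_+^0)/\varphi'(Z_-^0))$ at interior sites and the boundary weight $\alpha^{\pm1/2}$. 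Taking $N\to\infty$ and using that $\mu$ is the weighted average $\tfrac{\alpha}{1+\alpha}\mu_+ + \tfrac1{1+\alpha}\mu_-$ from Theorem \ref{thm:wl}, one gets $\mu^{(\beta,h)}(\{s_v=+1\})$ as an explicit rational/transfer-matrix expression in $a,b,\varphi'(Z_\pm^0),\alpha,g_0$.

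The smoothness claim then follows because all the ingredients — $a=e^{\beta+h}$, $b=e^{-\beta+h}$, the functions $\varphi'(Z_\pm^0(\beta,h))$, $g_0(\beta,h)$, and hence $\alpha(\beta,h)$ via \eqref{eq:alpha} — are smooth in $(\beta,h)$ by Proposition \ref{prop:zpm-smooth} (together with $\varphi$ being analytic on $|z|<\xi$ and $Z_\pm^0<\xi$), and the transfer-matrix expression for the single-site probability is a smooth function of these, the only thing to check being that the relevant denominators (the dominant eigenvalue of the effective transfer matrix, and $1+\alpha$) stay bounded away from zero, which holds since the transfer matrix has strictly positive entries and $\alpha>0$ by Remark \ref{rem:c-alpha}.

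For the vanishing of spontaneous magnetization, the clean route is the $h=0$ symmetry noted in Remark \ref{rem:zero_field}: at $h=0$ one has $a=b=e^\beta$... wait, rather $a=e^{\beta}$, $b=e^{-\beta}$ at $h=0$, but crucially $Z_+^0=Z_-^0=\bar Z$ so $\varphi'(Z_+^0)=\varphi'(Z_-^0)$, and from \eqref{eq:alpha} one computes $\alpha=1$; the effective interior field vanishes and the boundary weights $\alpha^{\pm1/2}=1$ become symmetric. Hence at $h=0$ the spin distribution along the spine is that of a genuinely field-free one-dimensional Ising chain, which is invariant under the global flip $s\mapsto -s$, forcing $\mu^{(\beta,0)}(\{s_v=+1\})=\tfrac12$. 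Combined with the already-established smoothness in $h$, this gives $\lim_{h\to0}\mu^{(\beta,h)}(\{s_v=+1\})=\mu^{(\beta,0)}(\{s_v=+1\})=\tfrac12$, which is \eqref{eq:mag-spine}.

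The main obstacle I anticipate is organizing step one correctly: carefully carrying out the sum over left/right branch numbers to see that the reduced spine measure really is a nearest-neighbor Ising chain with the stated effective field — in particular getting the boundary factor $\alpha^{(s_N-s_0)/2}$ and the normalization right when passing to the $N\to\infty$ limit and averaging $\mu_+$ against $\mu_-$, since the infinite branch at $u_N$ carries its own $\mu_{s_N}$ distribution (part iii) of the corollary) and one must check this does not reintroduce an asymmetry at $h=0$. Once the explicit formula is in hand, the smoothness and the $h=0$ flip symmetry are routine; the identity $\alpha|_{h=0}=1$ should be verified directly from \eqref{eq:alpha} as a small sanity check.
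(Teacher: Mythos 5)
Your proposal is correct and is essentially the paper's own proof: the same marginalization of Corollary \ref{cor:measures-description}(i) over the branching numbers (using $\sum_k (k+1)(Z^0_{s_i})^k p_{k+1}=\varphi'(Z^0_{s_i})$) to get the transfer-matrix/effective Ising chain form of the spine marginal, smoothness via Proposition \ref{prop:zpm-smooth}, and the observation that $\alpha=1$ and $Z_+^0=Z_-^0$ at $h=0$. The only cosmetic difference is that you conclude the $h=0$ case from global spin-flip symmetry of the field-free effective chain, whereas the paper notes that $c_1=c_2$ makes $(1\quad 1)$ a left eigenvector of $g_0\Phi'_0$ with eigenvalue $1$, so $p_{+j}+p_{-j}=\tfrac12$; these are the same computation.
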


\begin{proof}
For the root vertex $r$, we have by eq. (\ref{eq:wl}) that
\begin{equation}
	\mu^{(\beta,h)}(\set{ s(r) = + 1 }) = \frac{\alpha(\beta,h)}{1 + \alpha(\beta,h)}\,,
\label{eq:mag-root}
\end{equation}
where $\alpha(\beta,h)$ is given by (\ref{eq:alpha}) and is a smooth
function of $\beta, h$ by Proposition \ref{prop:zpm-smooth}. Hence, to
verify (\ref{eq:mag-spine}) for $v=r$ it suffices to note that
$\alpha(\beta,0) = 1$, since $a=b^{-1}$ and $Z_+^0 = Z_-^0$ for $h=0$.

Now, assume $v=u_N$ is at distance $N$ from the root, and define
\begin{equation}
	p_{ij} = \mu_i(\set{s_v = j }) \, \frac{\alpha^{\frac{1+i}{2}}}{1 + \alpha}\,,
\label{eq:prob-vpm}
\end{equation} 
for $i$, $j \in \set{ \pm 1 }$, where we use $\pm 1$ and $\pm$ interchangeably. From eq. (\ref{eq:prob-spine}) follows that
\begin{equation}
	\begin{split}
	\mu_{s_0}(\set{s_v = s_N}) & = \sum_{\substack{k'_i,k''_i \geq 0\\ s_1,\ldots,s_{N-1}}} \rho^{s_0}_{k'_1,\ldots,k'_N,k''_1,\ldots,k''_N} (s_0,\ldots,s_N) \\
	& = \sum_{s_1,\ldots,s_{N-1}} \prod_{i=1}^N g_0 [\Phi'(Z_+^0,Z_-^0)]_{s_{i-1}s_i} \, \alpha^{\frac{s_N - s_0}{2}} \\
	& = [(g_0 \, \Phi'(Z_+^0,Z_-^0))^N]_{s_0 s_N} \, \alpha^{\frac{s_N - s_0}{2}},
	\end{split}
\end{equation}
where we have used that the matrix elements of $\Phi'(Z_+^0,Z_-^0)$ are
given by
\begin{equation}
	[\Phi'(Z_+^0,Z_-^0)]_{s_{i-1}s_i} = e^{\beta s_{i-i} s_i + h s_i} \varphi'(Z_{s_i}^0)\,.
\end{equation}
Hence, substituting into (\ref{eq:prob-vpm}) we have
\begin{equation}
	p_{ij} = \left[ (g_0 \Phi'(Z_+^0,Z_-^0))^N \right]_{ij} \, \frac{\alpha^{\frac{1+j}{2}}}{1 + \alpha}.
\label{eq:prob-vpm-2}
\end{equation}
By Proposition \ref{prop:zpm-smooth}, all factors on the RHS of (\ref{eq:prob-vpm-2}) are
smooth functions of $\beta,h$, and by (\ref{eq:wl}) we have
\begin{equation}
	\mu^{(\beta,h)}(\set{s_v = j }) = p_{+j} + p_{-j}\,.
\end{equation}
Eq. (\ref{eq:mag-spine}) is now obtained from (\ref{eq:prob-vpm-2}) by
noting again that for $h=0$ we have $\alpha=1$ and hence $c_1 = c_2$ (see Remark \ref{rem:c-alpha}), which gives
\begin{equation}
	\begin{split}
		p_{+j} + p_{-j} & = 
		\left[ (1 \quad 1)(g_0 \, \Phi'(Z^0,Z^0))^N	\right]_j \, \frac{1}{2}\\
		& = 	(1 \quad 1)_j \, \frac{1}{2} = \frac{1}{2}\,.
	\end{split}
\end{equation} 
\end{proof}

The preceding proof together with (\ref{eq:prob-spine}) shows that the
distribution of spin variables $s_0,\ldots,s_N$ on the spine can be
written in the form 
\begin{equation}
	\rho(s_0,\ldots,s_N) = e^{-H'_N(s_0,\ldots,s_N)} \left( g_0^2 \, \varphi'(Z_+^0) \varphi'(Z_-^0) \right)^{N/2} \, \frac{\sqrt{\alpha}}{1 + \alpha}
\end{equation}
where
\begin{equation}
	H'_N(s_0,\ldots,s_N) = - \beta \sum_{i=1}^N s_{i-1} s_i - h' \sum_{i=1}^N s_i - \frac{s_N}{2} \log\alpha
\label{eq:hamil-spine}
\end{equation}
and 
\begin{equation}\label{hprime}
	h' = h + \frac{1}{2} \, \ln \frac{\varphi'(Z_+^0)}{\varphi'(Z_-^0)}\,.
\end{equation}
Since $\rho(s_0,\ldots,s_N)$ is normalized, the expectation value
w.r.t. $\mu$ of a function $f(s_0,\ldots,s_{N-1})$ hence coincides
with the expectation value w.\,r.\,t. the Gibbs measure of the Ising
chain on $[0,N]$, with Hamiltonian given by (\ref{eq:hamil-spine}) and
\eqref{hprime}. 
In particular, we have that the mean magnetization on the spine
vanishes in the absence of an external magnetic field, since $h'$ is a
smooth function of $h$, by Proposition \ref{prop:zpm-smooth}, and
vanishes for $h=0$ (see e.g. \cite{Baxter} for details about the 1d Ising model). 

We state this result as follows.

\begin{cor} Under the assumptions of Theorem \ref{thm:wl}, the mean magnetization on the spine vanishes as $h \to 0$, i.e.
\begin{equation}
	\lim_{h \rightarrow 0} \lim_{N \rightarrow \infty} \left< \frac{s_0 + \cdots + s_{N-1} }{N} \right>_{\beta,h} = 0.
\end{equation}
\end{cor}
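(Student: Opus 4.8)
The plan is to reduce the statement to the behavior of a standard one-dimensional Ising chain, using the explicit form of the spine spin distribution established just before the corollary. The preceding discussion shows that, for each fixed $N$, the marginal distribution of $(s_0,\ldots,s_N)$ under $\mu^{(\beta,h)}$ is exactly the Gibbs distribution of the Ising chain on $[0,N]$ with Hamiltonian $H'_N$ given by (\ref{eq:hamil-spine}) and effective field $h'$ given by (\ref{hprime}), together with an extra boundary term $-\frac{s_N}{2}\log\alpha$ at the right endpoint. Consequently the expectation $\langle (s_0+\cdots+s_{N-1})/N\rangle_{\beta,h}$ equals the mean magnetization (over the first $N$ sites) of this finite chain.

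First I would note that $h'$ is a smooth function of $(\beta,h)$ by Proposition \ref{prop:zpm-smooth} (since $Z_\pm^0$ are smooth and $\varphi'(Z_\pm^0)>0$), and that $h'=0$ when $h=0$ because then $Z_+^0=Z_-^0$, so $\varphi'(Z_+^0)=\varphi'(Z_-^0)$ and the logarithmic correction vanishes. Similarly $\alpha=1$ at $h=0$ by Remark \ref{rem:c-alpha}, so the boundary term also disappears. Thus at $h=0$ the chain is the symmetric nearest-neighbor Ising chain with zero field, whose single-site magnetizations all vanish by the spin-flip symmetry $s\mapsto -s$; hence $\langle(s_0+\cdots+s_{N-1})/N\rangle_{\beta,0}=0$ for every $N$, and the inner limit $N\to\infty$ is $0$ at $h=0$.

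For $h\neq 0$ one uses the standard transfer-matrix solution of the 1d Ising chain (as in \cite{Baxter}): the per-site magnetization of the infinite chain with field $h'$ is an explicit analytic function $m(\beta,h')$ that vanishes as $h'\to 0$, and the finite-$N$ average converges to it as $N\to\infty$ uniformly enough that the boundary term $-\frac{s_N}{2}\log\alpha$ contributes only an $O(1/N)$ correction. Taking $N\to\infty$ first gives $\lim_N \langle(s_0+\cdots+s_{N-1})/N\rangle_{\beta,h}=m(\beta,h')$, and then letting $h\to 0$, continuity of $h\mapsto h'$ with $h'\to 0$ forces $m(\beta,h')\to 0$. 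The only mild technical point — the main "obstacle", though it is routine — is justifying the interchange/control of the boundary term and confirming that convergence of the finite-chain Cesàro average to the thermodynamic magnetization is uniform in the relevant range of $h$; this is handled entirely within the well-understood transfer-matrix analysis of the 1d Ising model and requires no new input beyond smoothness of $h'$ in $h$.
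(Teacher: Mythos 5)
Your proposal is correct and takes essentially the same route as the paper: the paper likewise identifies the spine marginal with the Gibbs measure of the Ising chain on $[0,N]$ with Hamiltonian (\ref{eq:hamil-spine}) and effective field $h'$ from \eqref{hprime}, then invokes smoothness of $h'$ in $h$, the vanishing of $h'$ (and of $\log\alpha$) at $h=0$, and the standard transfer-matrix theory of the one-dimensional Ising model. Your treatment of the boundary term $-\tfrac{s_N}{2}\log\alpha$ as an $O(1/N)$ correction is a slightly more explicit version of what the paper leaves to the reference \cite{Baxter}.
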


\subsection{Mean magnetization}
\label{sec:mean-mag}

For the mean magnetization on the full infinite tree, defined in Sec. \ref{sec:sm-mag}, we have the following result,
which requires some additional estimates in combination with Proposition \ref{prop:zpm-smooth}.
\begin{thm}
\label{thm:no-mean-mag}
Under the assumptions of Theorem \ref{thm:wl}, the mean magnetization vanishes for
$h \rightarrow 0$, i.e.
\begin{equation}
	\lim_{h \rightarrow 0} M(\beta,h) = 0\,, \quad \beta \in
        \mathbb{R}\,,
\end{equation}
where $M(\beta,h)$ is defined by (\ref{mag1})-(\ref{mag2}).
\end{thm}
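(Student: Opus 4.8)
The plan is to estimate $M_R(\beta,h)$ directly from the explicit description of $\mu$ in Corollary~\ref{cor:measures-description}, splitting the numerator $\left<\sum_{v\in B_R}s_v\right>_{\beta,h}$ into a spine contribution and a branch contribution, and controlling each separately. The denominator $\left<|B_R|\right>_{\beta,h}$ is already pinned down by \eqref{volball} to be of order $R^2$, so it suffices to show that the numerator is $o(R^2)$ as $h\to0$, uniformly enough in $R$ that the $\limsup$ survives. First I would write
\begin{equation}
	\left<\sum_{v\in B_R(\tau)}s_v\right>_{\beta,h} = \left<\sum_{i=0}^{R-1}s_{u_i}\right>_{\beta,h} + \left<\;\sideset{}{^{R}}\sum_{T\subset\tau}\sum_{v\in V(T)\cap B_R}s_v\right>_{\beta,h},
\end{equation}
where the first term is the spine magnetization up to level $R$ and the second collects the spins in the branches attached at spine vertices $u_0,\dots,u_{R-1}$.

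For the spine term, Theorem~\ref{thm:nomag-spine} together with the effective one-dimensional description via the Hamiltonian \eqref{eq:hamil-spine}--\eqref{hprime} gives that each $\left<s_{u_i}\right>_{\beta,h}$ equals the magnetization of a half-line Ising chain in effective field $h'$, which is a smooth function of $h$ vanishing at $h=0$; hence $\left|\sum_{i=0}^{R-1}\left<s_{u_i}\right>\right|\le R\cdot\sup_i|\left<s_{u_i}\right>_{\beta,h}|\le C(\beta)\,R\,|h'|$, which is $o(R^2)$ and in fact $\to0$ after dividing by $R^2$ and then sending $h\to0$. For the branch term I would use Corollary~\ref{cor:measures-description}(ii): conditionally on the spine data, a branch at $u_i$ with spin value $s_i$ at its root is distributed according to $\nu_{s_i}$, so its expected signed volume is
\begin{equation}
	m_R^{\pm} \equiv \left<\sum_{v\in V(\tau)\cap B_R}s_v\right>_{\nu_\pm},
\end{equation}
and by the same transfer-matrix argument that produced \eqref{eq:DR_branch_sys} (now tracking $\sum_v s_v$ rather than $|D_R|$) one gets a linear recursion whose matrix is $g_0\Phi'_0$ but with the spin-weighted analogue; the key point is that at $h=0$ the branch is spin-symmetric so $m_R^{\pm}=0$, and by smoothness in $h$ (Proposition~\ref{prop:zpm-smooth}, which makes all the $Z_\pm^0$, $g_0$, hence the recursion coefficients, smooth) one has $|m_R^\pm|\le C(\beta)\,|h|\,\left<|D_R|\right>_{\nu_\pm}\le C'(\beta)\,|h|$ using \eqref{eq:fR-bound}. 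Summing over the $\le R$ spine vertices and over the $O(1)$-on-average branches at each, the branch contribution to the numerator is $O(R|h|)$ as well.

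Combining, $|M_R(\beta,h)|\le \dfrac{C(\beta)R|h|}{1+R+\frac12 k_1 R(R+1)}\le \dfrac{C'(\beta)|h|}{R}$ for all $R$, so $M(\beta,h)=\limsup_R M_R(\beta,h)$ obeys $|M(\beta,h)|\le C'(\beta)|h|$ — actually this already gives $M(\beta,h)\to0$, and even $M(\beta,0)=0$. The main obstacle I anticipate is making the branch estimate genuinely uniform: one must check that the constant multiplying $|h|$ in the bound $|m_R^\pm|\le C|h|$ does not grow with $R$, which requires exploiting the eigenvalue-$1$ structure of $g_0\Phi'_0$ (the left eigenvector $c$ from \eqref{eq:c-def}) exactly as in the proof of Theorem~\ref{thm:Haus-dim}, now applied to the $h$-derivative of the recursion; the subleading eigenvalue $\lambda$ from \eqref{eq:lambda} satisfies $|\lambda|<1$, so the transient part decays and the bound stays $O(R)$ after summation rather than $O(R^2)$. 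A secondary technical point is interchanging $\lim_{h\to0}$ with the conditional expectations and the $\limsup_R$, which is handled by the uniform-in-$R$ bound just described together with dominated convergence on the (compact) configuration space, using Proposition~\ref{prop:compact}.
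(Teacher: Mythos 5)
Your overall decomposition (spine plus branches, with the denominator pinned to order $R^2$ by \eqref{volball}) matches the paper's strategy, and the spine estimate is essentially the content of Theorem \ref{thm:nomag-spine} and its corollary. The branch estimate, however, rests on a false claim: you assert that at $h=0$ the branch is spin-symmetric so that $m_R^\pm=0$. It is not: $\nu_\pm$ conditions the spin at the branch root to be $\pm1$, so the branch carries a nonzero magnetization correlated with that root spin. Explicitly, in the paper's normalization $m_R^\pm=Z_\pm^0\langle S_R\rangle_{\nu_\pm}$ (with $S_R$ the signed spin sum at distance $R$) one has $m_0^\pm=\pm Z_\pm^0$ and $m_R=(g_0\Phi'_0)^R\,m_0$; at $h=0$ the vector $(1,-1)^T$ is an eigenvector of $g_0\Phi'_0$ with eigenvalue $\lambda=\tanh\beta$, so $m_R^\pm=\pm Z^0(\tanh\beta)^R\neq 0$. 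The spin-flip symmetry at $h=0$ only gives $m_R^+=-m_R^-$, not $m_R^\pm=0$, so your bound $|m_R^\pm|\le C(\beta)|h|$ fails, and with it the claimed $O(R|h|)$ bound on the branch contribution and the final estimate $|M_R|\le C'(\beta)|h|/R$.

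The cancellation that actually kills the branch contribution as $h\to0$ occurs only after averaging over the spine-spin value at the attachment vertex: what enters $\langle U_{R',N}\rangle_{\beta,h}$ is the combination $c_1\alpha d_{R'}^+ + c_2 d_{R'}^-$ (plus a transient proportional to $\lambda^N$ whose coefficient $B$ vanishes as $h\to0$), where $d_{R'}^\pm=\varphi'(Z_\pm^0)^{-1}\varphi''(Z_\pm^0)\,m_{R'}^\pm$. The paper controls this via the conservation law $c_1m_R^++c_2m_R^-=c_1m_1^++c_2m_1^-$ coming from the eigenvalue-$1$ left eigenvector $c$ of \eqref{eq:c-def}: this quantity is independent of $R$ and tends to $0$ as $h\to0$ because $c_1=c_2$ and $m_1^+=-m_1^-$ there, while the remaining terms vanish uniformly in $R$ by continuity of $Z_\pm^0$ and $g_0$ (Proposition \ref{prop:zpm-smooth}). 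You do anticipate that the eigenvalue-$1$/eigenvalue-$\lambda$ structure is needed for uniformity, but in your write-up it appears as a technical afterthought, whereas it is in fact the substitute for the incorrect pointwise vanishing of $m_R^\pm$ and hence the core of the argument; note also that the honest conclusion is only $|\langle M_R\rangle_{\beta,h}|\le\epsilon\cdot\mathrm{const}+C'R^{-1}$ for $|h|<\delta(\epsilon)$, not a bound linear in $|h|$.
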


\begin{proof}
Consider the measure $\nu_\pm$ given by (\ref{eq:prob-branch}) and, for a given finite branch $T$, let $S_R(T)$ denote the sum of spins at distance $R$ from the root of $T$. Setting
\begin{equation}
	m_R^\pm =  Z_\pm^0 \left< S_R \right>_{\nu_\pm} 
\end{equation}  
it follows, by decomposing $T$ according to the spin and the degree of the vertex closest to the root, that  
\begin{equation}
	\begin{cases}
		m_R^+ = g_0 \, \left(a \, \varphi'(Z_+^0) \, m_{R-1}^+ + a^{-1} \, \varphi'(Z_-^0) \, m_{R-1}^-\right)\\
		m_R^- = g_0 \, \left(b \, \varphi'(Z_+^0) \, m_{R-1}^+ + b^{-1} \, \varphi'(Z_-^0) \, m_{R-1}^-\right),
	\end{cases}
\label{eq:rec-mag}
\end{equation}
for $R\geq 2$, and $m_0^\pm = \pm Z_\pm^0$.
In matrix notation these recursion relations read
\begin{equation}
	m_R = g_0 \Phi'_0 \, m_{R-1}, 
\end{equation}
which, upon  multiplication by the left eigenvector $c$ of $g_0 \Phi'_0$, leads to
\begin{equation}
	c \, m_R = g_0 \, c \, \Phi'_0 \, m_{R-1} = c \, m_{R-1},
\end{equation}
and hence 
\begin{equation}
		c_1 \, m_R^+ + c_2 \, m_R^- = c_1 \, m_1^+ + c_2 \, m_1^-, \qquad R \geq 1\,. 
\label{eq:cmr}
\end{equation}

Now, fix $N \geq 1$ and let $U_{R,N}$ denote the sum of all spins at
distance $R \geq 1$ from the $N$'th spine vertex $u_N$ in the branches
attached to $u_N$. The conditional expectation of $U_{R,N}$, given
$s_0, s_1, \ldots, s_N$, then only depends on $s_N$, and its value is
obtained from Corollary \ref{cor:measures-description} by summing over
$k'_N, k''_N \geq 0$, which yields
\begin{equation}
\begin{split}
	\left( \sum_{k = 0}^\infty (Z_{s_N}^0)^k (k+1) p_{k+1} \right)^{-1} & \sum_{k = 0}^\infty (Z_{s_N}^0)^{k-1} p_{k+1} k (k+1) \, m_R^{s_N} \\
	& = \varphi'(Z_{s_N}^0)^{-1} \varphi''(Z_{s_N}^0) \, m_R^{s_N} \equiv d_R^{s_N}\,.
\end{split}
\label{eq:drpm}
\end{equation}
Using the matrix representation (\ref{eq:prob-vpm-2}) for $p_{ij}$, this gives
\begin{equation}\label{URN}
	\left< U_{R,N} \right>_{\beta,h} = \frac{1}{1 + \alpha} \, (1 \quad 1) \, (g_0 \Phi'(Z_+^0,Z_-^0))^N
	\begin{pmatrix}
	\alpha \, d_R^+ \\
	d_R^-
	\end{pmatrix}\,.
\end{equation}

As pointed out in Remark \ref{rem:c-alpha}, the matrix $g_0 \Phi'_0$
has a second left eigenvalue $\lambda$ such that $|\lambda|<1$. Let
$(e_1, e_2)$ be a smooth choice of eigenvectors
corresponding to $\lambda$ as a function of $(\beta, h)$, e.g. 
\begin{equation}
	\begin{pmatrix}
	e_1 \\ 
	e_2
	\end{pmatrix}
	=
	\begin{pmatrix}
	g_0 b \varphi'(Z_+^0)\\
	g_0 b^{-1} \varphi'(Z_-^0)-1
	\end{pmatrix}\,,
\end{equation}
and write
\begin{equation}
	\begin{pmatrix}
	1 \\
	1
	\end{pmatrix}
	= A
	\begin{pmatrix}
	c_1 \\
	c_2
	\end{pmatrix}
	+
	B
	\begin{pmatrix}
	e_1 \\
	e_2
	\end{pmatrix}\,.
\end{equation}
From \eqref{URN} we then have
\begin{equation}
\begin{split}
	\left< U_{R,N} \right>_{\beta,h} & = \frac{1}{1 + \alpha} \,
        \left( A (c_1 \quad c_2) + B \lambda^N (e_1 \quad e_2)\right) 
	\begin{pmatrix}
	\alpha \, d_R^+\\
	d_R^-
	\end{pmatrix} \\
	&= \frac{A}{1 + \alpha} ( c_1 \alpha d_R^+ + c_2 d_R^-) + \lambda^N \frac{B}{1 + \alpha} ( e_1 \alpha d_R^+ + e_2 d_R^- )\,,
\end{split}
\label{eq:URN-calc}
\end{equation}
and from (the proof of) Theorem \ref{thm:nomag-spine} it follows that $A
\rightarrow \tilde{c}^{-1}$ and $B \rightarrow 0$ for $h \rightarrow
0$, where $\tilde{c} = c_1(\beta,0) = c_2(\beta,0)$.
 
Next, note that  $|d_R^\pm|, R\geq 1,$ are bounded by a constant $C_1=C_1(\beta,h)$ as a
consequence of  (\ref{eq:fR-bound}), and that
\begin{equation}
\begin{split}
	\left< M_R(\beta,h) \right>_{\beta,h} & \leq
        \left<|B_R|\right>_{\beta,h}^{-1}\sum_{R',N \leq R}\left| \left<
            U_{R',N} \right>_{\beta,h} \right|\\
&\leq C_2 R^{-2}\sum_{R',N \leq R}\left| \left<
            U_{R',N} \right>_{\beta,h} \right|
\end{split}
\end{equation}
 for some constant $C_2=C_2(\beta,h)$ by \eqref{volball}. It now
 follows from \eqref{eq:URN-calc} that
\begin{equation}
	\left| \left< M_R(\beta,h) \right>_{\beta,h} \right| \leq  \frac{A\,C_2}{R(1 + \alpha)} \sum_{R'=1}^R ( c_1 \alpha d_R^+ + c_2 d_R^- ) + R^{-1} \, B \, C_1\,C_2 \, \max\{ e_1,e_2\}.
\end{equation}
Obviously, the second term on the RHS vanishes in the limit
$R\to\infty$. Rewriting the summand in the first term on the RHS as
\begin{equation}
\begin{split}
	c_1 \alpha d_R^+ + c_2 d_R^- & = c_1 \, \alpha \, m_R^+ \varphi'(Z_+^0)^{-1} \varphi''(Z_+^0) + c_2\,  m_R^- \varphi'(Z_-^0)^{-1} \varphi''(Z_-^0) \\
	& = (c_1 m_R^+ + c_2 m_R^-) \, \varphi'(Z^0)^{-1} \varphi''(Z^0) \\
	& +  c_1 m_R^+ \left[ \alpha \varphi'(Z_+^0)^{-1} \varphi''(Z_+^0) - \varphi'(Z^0)^{-1} \varphi''(Z^0) \right] \\
	& +  c_2 m_R^- \left[ \varphi'(Z_-^0)^{-1} \varphi''(Z_-^0) - \varphi'(Z^0)^{-1} \varphi''(Z^0) \right]\,,
\end{split}
\end{equation}
we see the last two terms in this expression tend to $0$ uniformly in
$R$ as $h\to 0$ by continuity of $Z_\pm^0$, $g_0$
and boundedness of $|m_R^\pm|$, and the same holds for the first term
as a consequence of (\ref{eq:cmr}) 
and continuity of $c_1$, $c_2$, $Z_\pm^0$ and
$g_0$. In conclusion, given $\epsilon>0$ there exists $\delta>0$ such that 
\begin{equation}
	\left| \left< M_R(\beta,h) \right>_\mu \right| \leq \epsilon \frac{A\,C_2}{1 + \alpha}  + C' \, R^{-1}\,,
\end{equation}
if $|h|<\delta$, where $C'$ is a constant. This completes the proof of
the theorem.
\end{proof}

\begin{rem} A natural alternative to the mean magnetization
as defined by \eqref{mag1}-\eqref{mag2} is the quantity 
\begin{equation}
\bar M(\beta,h) = \limsup_{R\to\infty} \bar M_R(\beta,h)\,,
\end{equation}
where
\begin{equation}
	\bar{M}_R(\beta,h) =  \left<|B_R(\tau)|^{-1}\sum_{v \in B_R(\tau)}
          s_v \right>_{\beta,h}\,.
\end{equation}
It is natural to conjecture that $\lim_{h\to 0}\bar M(\beta,h) = 0$ holds for generic Ising trees.
\end{rem}

\section{Two-point function}

With the notation introduced in the previous section, we define the connected \emph{two-point function} as
\begin{equation}
	G_{\beta,h}(R) = \frac{1}{\langle |D_R| \rangle_{\beta,h}} \left( \langle V_R s_0 \rangle_{\beta,h} - \langle V_R \rangle_{\beta,h} \langle s_0 \rangle_{\beta,h} \right),
\label{eq:}
\end{equation} 
where $V_R$ denotes the sum of the spins at distance $R$ from the root, that is
\begin{equation}
	V_R = \sum_{N+R'=R} U_{R',N}.
\end{equation}
Applying the same techniques as above we find
\begin{equation}
	G_{\beta,h}(R) = \frac{2 \alpha}{(1+\alpha)^2} \, \frac{1}{\langle|D_R|\rangle_{\beta,h}}\sum_{N + R'=R} \lambda^N (d_{R'}^+ - d_{R'}^-),
\end{equation}
where $d_{R'}^\pm$ are defined in (\ref{eq:drpm}). The explicit expression for $d_{R}^\pm$ can be obtained from the system (\ref{eq:rec-mag}), which gives
\begin{equation}
	d_{R}^\pm  = B_\pm + C_\pm \, \lambda^{R}, \qquad R \geq 1\,,
\end{equation}
with
\begin{equation}
	B_+ = -\frac{\varphi''(Z^0_+)}{\varphi'(Z^0_+)}\frac{(c_1m_1^+
          + c_2 m_1^-)e_2}{c_2 e_1 - c_1 e_2}, \qquad C_+ =
        \frac{\varphi''(Z^0_+)}{\varphi'(Z^0_+)} \frac{(e_1 m_1^+ +
          e_2 m_1^-) c_2}{c_2 e_1 - c_1 e_2}  
\end{equation}
and
\begin{equation}
	B_- =  \frac{\varphi''(Z^0_-)}{\varphi'(Z^0_-)} \frac{(c_1 m_1^+ + c_2 m_1^-)e_1}{c_2 e_1 - c_1 e_2}, \qquad C_- = - \frac{\varphi''(Z^0_-)}{\varphi'(Z^0_-)} \frac{(e_1 m_1^+ + e_2 m_1^-) c_1}{c_2 e_1 - c_1 e_2}.
\end{equation}
Hence, using $d_0^\pm = \pm 1$, we have 
\begin{equation}
	G_{\beta,h}(R) = \frac{2 \alpha}{(1+\alpha)^2} \frac{1}{\langle |D_R| \rangle_{\beta,h}} \left[ (B_+ - B_-) \frac{\lambda - \lambda^R}{1-\lambda} + (C_+ - C_-) (R-1) \lambda^R + 2 \lambda^R  \right]
\label{eq:2pf}
\end{equation}
Restricting attention to $\beta, h\geq 0$ standard correlation inequalities
imply that $Z_-(\beta,h)\leq Z_+(\beta,h)$ and $G_{\beta,h}(R)\geq 0$, and consequently $B_+-B_-\geq
0$. For $h=0$ we have $m_1^- =- m_1^+,\,c_1=c_2, e_1=-e_2$
and hence $B_+=B_-=0$, and
$\langle|D_R|\rangle_{\beta,0}$ can be calculated
explicitly. Eq. \eqref{eq:2pf} then reduces to
\begin{equation}
	G_{\beta,0}(R) =  \lambda^R = (\tanh \beta)^R,
\end{equation} 
a result that is easily obtainable directly by computing the
two-point function on finite trees. For $h>0$ we generally have
$B_+\neq B_-$ in which case 
$$
G_{\beta,h}(R) \sim \frac{1}{R}
$$ for $R$ large, since 
$$
{\langle |D_R| \rangle_{\beta,h}} \sim R
$$
as a consequence of eq. (\ref{eq:fR-bound}) and Corollary
\ref{cor:measures-description}. Thus the decay of $G_{\beta,h}(R)$ in
this case is entirely determined by the behaviour of ${\langle |D_R|
  \rangle_{\beta,h}}$. It is conceivable that the alternative two-point function
$\bar G_{\beta,h}(R)$ defined by 
\begin{equation}
	\bar{G}_{\beta,h}(R) = \left< \frac{V_R}{|D_R|} s_0 \right>_{\beta,h} - \left< \frac{V_R}{|D_R|} \right>_{\beta,h} \left< s_0 \right>_{\beta,h}.
\end{equation}
decays exponentially, but our methods do not suffice to prove it. For
$h=0$ it is easy to see that $G_{\beta,0}(R)=\bar G_{\beta,0}(R)$.

\section{Conclusions}\label{sec:concl}
The statistical mechanical models on random graphs considered in this
paper possess two simplifying features, beyond being Ising models, the
first being that the graphs are restricted to be trees and the second
that they are generic, in the sense of \eqref{eq:genass}. Relaxing the
latter condition might be a way of producing models with different
magnetization properties from the ones considered here. Infinite 
non-generic trees having a single vertex of infinite degree have been
investigated in  \cite{JS09,JS11}, but it is unclear whether a
non-trivial coupling to the Ising model is possible. A different
question is whether validity of the genericity condition
\eqref{eq:genass} for $h=0$ implies its validity for all $h\in \mathbb{R}$. 
The arguments presented in Section \ref{sec:part-func} only show that the
domain of genericity in the $(\beta,h)$-plane is an open subset
containing the $\beta$-axis, and thus leaves open the possibility of
a transition to non-generic behavior at the boundary of this set.

Coupling the Ising model to other ensembles of infinite graphs
represents a natural object of future study. In particular, models
of planar graphs may be tractable. The so-called uniform infinite causal triangulations
of the plane are known to be closely related to plane trees
\cite{DJW3,KY}, and a quenched version of this model coupled to the
Ising model without external field has been considered in \cite{KY},
and found to have a phase transition. Analysis of the non-quenched version,
analogous to the models considered in the present paper, seem to
require developing new techniques. Surely, this is also the case for other 
planar graph models such as the uniform infinite planar triangulation \cite{AS}
or quadrangulation \cite{CD}.

\ack

This work is supported by the Danish Agency for Science, 
Technology and Innovation through the Geometry and Mathematical
Physics School (GEOMAPS), by the NordForsk researcher network in Random Geometry,
 and by the Danish National Research Foundation (DNRF) through
the Centre for Symmetry and Deformation.

\section*{References}

\bibliography{bibtrees}{}
\bibliographystyle{habbrv}

\end{document}